\documentclass[11pt,a4paper]{article}
\usepackage[T1]{fontenc}
\usepackage{lmodern,amsmath,amssymb,amsthm,mathtools}
\usepackage[margin=25mm]{geometry}
\usepackage{graphicx,booktabs,tabularx,microtype,enumitem,xcolor}
\usepackage{placeins}
\usepackage[round,authoryear]{natbib}
\usepackage[colorlinks=true,allcolors=teal!60!black]{hyperref}
\hypersetup{pdftitle={Selective Inference for CART with Binary Outcomes},
 pdfsubject={Working manuscript on finite-sample conditional testing and power under parent homogeneity},
 pdfkeywords={selective inference, CART, binary outcome, Gini gain, conditional calibration}}
\newtheorem{theorem}{Theorem}[section]
\newtheorem{proposition}[theorem]{Proposition}

\theoremstyle{definition}
\theoremstyle{remark}
\newcommand{\PP}{\mathbb P}
\newcommand{\EE}{\mathbb E}
\newcommand{\1}{\mathbf 1}
\DeclareMathOperator{\Var}{Var}
\DeclareMathOperator{\Cov}{Cov}
\setlist{itemsep=3pt,topsep=5pt}
\allowdisplaybreaks
\title{Selective Inference for CART with Binary Outcomes}
\author{Tomoshige Nakamura \thanks{Faculty of Health Data Science, Juntendo University}}
\date{\today}
\begin{document}
\maketitle
\begin{abstract}
Binary classification trees select subgroups using the same outcomes later
used to assess their differences. We develop finite-sample conditional
tests of a common success probability within a parent selected by
deterministic Gini CART. The construction retains all eligible cutpoints
and conditions on the selected split, its ancestor path, the parent
success total, and outside outcomes. The resulting uniform label fiber
gives an exact count distribution, while a reversible parallel Monte
Carlo construction yields super-uniform inclusive and exactly uniform
tie-randomized p-values for any prespecified finite run budget. Within
a two-child constant-risk model, the selected count law is an exponential
family with information equal to its conditional count variance. We
separate information loss from computational limitations and exhibit a
selected fiber disconnected under single-label swaps. Simulations with
200 or 400 observations, ten independent or correlated predictors,
and trees of depth three show conservative inclusive tests and
nontrivial power for large risk differences. At 400 observations and
a generating risk difference of 0.4, randomized rejection conditional
on reaching the prespecified third-level target is 51--71\%, while
selection followed by rejection occurs in 11--21\% of datasets.
Smaller signals remain difficult to detect, and a representative
fivefold increase in computation gives little power improvement.
The guarantee concerns parent homogeneity, or equality of two constant
child risks, and does not cover equality of heterogeneous regional averages.
\end{abstract}
\noindent\textbf{Keywords:} selective inference; Gini CART; binary outcomes;
exact conditional tests; Monte Carlo tests; statistical power.

\section{Introduction}
Identifying groups with different outcome risks is a common objective
in data analysis. In medical and epidemiological studies, for example,
investigators may seek combinations of characteristics that distinguish
groups with different probabilities of an event. Classification and
regression trees (CART; \citealp{breiman1984}) provide an interpretable
approach to this task by recursively partitioning predictor space.
Each split produces two groups whose observed outcomes differ. When
groups are specified independently of the outcomes, their risks can be
compared using standard inferential methods under the relevant sampling
model. When CART selects the groups using those same outcomes, however,
the observed differences also reflect the search over candidate groups.
Inference must account for this selection to distinguish evidence of
a risk difference from variation favored by the search.

Selective inference addresses this problem by evaluating a comparison
conditional on the event that it was selected. Tree-Values
\citep{neufeld2022} develops this approach for regression trees, providing
tests for differences between selected sibling regions and confidence
intervals for regional means under a Gaussian response model. Binary
outcomes call for a conditional reference distribution that accounts
for Bernoulli sampling. Related work by \citet{strobl2007} derives the
exact null distribution of maximized Gini gain for a binary response
and uses it to construct a split-selection criterion. Here we retain
Gini maximization and test conditional on the winning split and the
ancestor history that produced its parent.

We study this question for unpruned Gini CART with prespecified growing
rules. We test the null hypothesis that all observations in the parent
of a selected split share a common success probability. In a model with one constant success
probability in each child, this null is equivalent to equality of the
two child risks. It restricts individual risks within the parent more
strongly than the regional-mean null considered by Tree-Values.
The target is local: independent outcome risks outside
the parent may differ, allowing a homogeneous region to be tested within
a tree that contains signal elsewhere. Our construction accommodates
multiple predictors, all eligible cutpoints, and selected internal nodes
at any finite depth by accounting for the ancestor splits that define
the target parent.

Our construction specializes classical fixed-total conditioning and the
selective exponential-family framework of \citet{fithian2014}.
Under the common-risk null, binary configurations with a given parent
success total have equal probability. Conditioning additionally on the
selected path retains only configurations reproducing the recorded
Gini comparisons. For a split below the root, we also fix outcomes
outside the parent. This gives a conditional distribution free of the
common risk and of the independent outside risks. The resulting
Fisher-type count comparison incorporates CART's search across
predictors and cutpoints.

Computing the conditional distribution by enumerating all admissible
configurations becomes expensive as the parent grows. We address this
computational problem using the parallel Monte Carlo construction of
\citet{besag1989}; see also \citet{howes2023}. We use reversible
label updates that preserve the target split and its ancestor path,
and construct tests from exchangeable reference endpoints. Both an
inclusive test and a test that randomizes within tied ranks have
finite-sample conditional guarantees for a prespecified computational
budget. The exchangeability guarantee follows from that existing
construction; the CART implementation specifies label swaps and checks
every winning comparison along the target path.

The contribution is a concrete conditional testing procedure for
multivariate Gini CART, including all eligible cutpoints, deterministic
tie resolution, and ancestor selection, together with an analysis of
its computational limitations and power. We exhibit a selected set of
label configurations that single swaps cannot connect. Standard
exponential-family arguments under the two-child model characterize
the conditional information and one-sided power available before
these computational restrictions are imposed.

Simulation studies with 200 or 400 observations
and ten independent or correlated predictors evaluate calibration at
depths up to three and investigate power at a third-level target.
We distinguish rejection conditional on reaching that target from
selection followed by rejection, and compare chain length with the
number of reference endpoints at fixed computational costs. Together, these
results describe the statistical and computational behavior of exact
selective testing for binary Gini CART under parent homogeneity.

Section~\ref{sec:setting} specifies the tree-growing rule, inferential
target, and selection event. Section~\ref{sec:homogeneity} derives the
conditional count distribution and tests. Section~\ref{sec:computation}
presents the Monte Carlo implementation and examines conditional
information and power. Section~\ref{sec:simulations} reports the
simulation results, and Section~\ref{sec:discussion} discusses their
implications and directions for further work.
\section{CART construction and the inferential target}
\label{sec:setting}
We consider an unpruned binary CART grown by maximizing Gini gain over
all eligible cutpoints. The growing rule and its deterministic treatment
of ties specify the selection event used for inference.

\subsection{Sampling, node regions, and depth}
\label{sec:tree-notation}
Fix a design $X=(X_1,\ldots,X_n)$ with fully observed predictors
$X_i\in\mathbb R^p$. Conditional on $X$, let
\[
 Y_i\overset{\mathrm{ind}}\sim\operatorname{Bernoulli}(p_i),
 \qquad i=1,\ldots,n.
\]
The success probabilities may vary across observations. All inferential
statements condition on $X$; no distributional assumption on the
predictors is required.

Index nodes by words $v\in\{L,R\}^{*}$, with the empty word
$\varnothing$ denoting the root. Write $R_v\subseteq\mathbb R^p$ for
the node region, $I_v=\{i:X_i\in R_v\}$ for its sample index set,
and $N_v=|I_v|$. The root has $R_{\varnothing}=\mathbb R^p$ and depth
zero. Splitting coordinate $j$ at threshold $t$ produces the regions
\begin{equation}
 R_{vL}=R_v\cap\{x:x_j\le t\},\qquad
 R_{vR}=R_v\cap\{x:x_j>t\}.
 \label{eq:recursive-regions}
\end{equation}
A coordinate may be used more than once along a path. Left and right
refer to predictor order, independently of the observed child risks.

Fix the maximum depth $D$ and a positive integer minimum child size
$n_{\min}$ before inspecting outcomes. A node at depth $|v|<D$ may
split; its split is reported as \emph{level} $|v|+1$. Thus $D=3$
permits tests at levels one, two, and three, with at most seven internal
nodes and eight leaves. Early stopping may yield fewer nodes.

\subsection{Eligible cutpoints}
\label{sec:candidate-families}
Within node $v$, sort feature $j$ as
$X^{v,j}_{(1)}\le\cdots\le X^{v,j}_{(N_v)}$. Rank $k$ is eligible if
\begin{equation}
 n_{\min}\le k\le N_v-n_{\min},\qquad
 X^{v,j}_{(k)}<X^{v,j}_{(k+1)}.
 \label{eq:eligible-rank}
\end{equation}
The corresponding threshold is
$t_{v,j,k}=\{X^{v,j}_{(k)}+X^{v,j}_{(k+1)}\}/2$. For candidate
$c=(j,k)$, denote its left and right sample index sets by $I_{v,c}^L$
and $I_{v,c}^R$, with sizes $a_c=k$ and $b_c=N_v-k$.
Equal predictor values are never separated.

The candidate family $\mathcal C_v(X)$ contains every eligible pair
$(j,k)$ across all $p$ features. With no predictor ties and
$N_v\ge2n_{\min}$, there are $p(N_v-2n_{\min}+1)$ candidates.
A constant feature supplies no eligible cut. Candidate cutpoints are
computed within each current parent; once that parent's membership is
fixed, they depend only on $X$.

\subsection{Gini maximization and the growing algorithm}
\label{sec:cart-algorithm}
For a nonempty sample index set $I$, write
$\overline Y_I=|I|^{-1}\sum_{i\in I}Y_i$ and define binary Gini
impurity $G(I)=2\overline Y_I(1-\overline Y_I)$. The impurity decrease
for candidate $c$ in node $v$ is
\begin{align}
 \operatorname{Gain}_v(c)
 &=G(I_v)-\frac{a_c}{N_v}G(I_{v,c}^L)
                -\frac{b_c}{N_v}G(I_{v,c}^R)\nonumber\\
 &=\frac{2a_cb_c}{N_v^2}
       (\overline Y_{I_{v,c}^L}-\overline Y_{I_{v,c}^R})^2
  =\frac{2}{N_v}T_{v,c}^2,\label{eq:gini-gain}\\
 T_{v,c}&=\sqrt{\frac{a_cb_c}{N_v}}
       (\overline Y_{I_{v,c}^L}-\overline Y_{I_{v,c}^R}).
 \label{eq:cart-score}
\end{align}
Thus the winner maximizes the squared scaled difference in observed
child risks. Its sign plays no role in selection.

Candidates have a fixed priority: smaller feature index first, then
smaller cut rank. Starting at the root, the fitted tree
$\widehat{\mathcal T}=\mathcal T(X,Y;D,n_{\min})$ is constructed as follows:
\begin{enumerate}[label=\arabic*.]
 \item If $|v|=D$ or $\mathcal C_v(X)$ is empty, declare $v$ terminal.
 \item Otherwise, compute all candidate gains and choose the first
 maximizer according to the fixed priority. If its gain is zero,
 declare $v$ terminal.
 \item If the maximum gain is positive, store its feature, rank, and
 midpoint, form the regions in \eqref{eq:recursive-regions}, and apply
 the same recursion to both children.
\end{enumerate}
All observations have equal weight. The tree is grown to the prescribed
depth subject to these local stopping rules and is not pruned.
In particular, $N_v<2n_{\min}$ or a pure parent forces termination.
Each node's decision depends only on observations in that node, so the
order of processing different branches does not affect the fitted tree.

For binary outcomes, put $M_v=\sum_{i\in I_v}Y_i$ and
$K_{v,c}=\sum_{i\in I_{v,c}^L}Y_i$. Then
\[
 T_{v,c}=\frac{N_vK_{v,c}-a_cM_v}{\sqrt{N_va_cb_c}}.
\]
Candidate scores can therefore be compared by cross-multiplying the
integer numerators and denominators of
$(N_vK_{v,c}-a_cM_v)^2/(a_cb_c)$. Our implementation uses these exact
comparisons within its documented integer ranges. Ties have positive
probability even with continuous predictors, making their deterministic
treatment part of the selection rule.

\subsection{The selected path and conditioning event}
\label{sec:selection-event}
Fix a possible target address $v\in\{L,R\}^{d}$ with $d<D$, and let
$v_0=\varnothing,v_1,\ldots,v_d=v$ be its successive prefixes. A path
history $h=(c_0,\ldots,c_d)$ records the selected feature and cut rank
at each ancestor and at the target split. Given $X$ and $h$, these
splits determine the parent memberships, thresholds, and competing
candidates along the path. Consider only histories whose splits are
eligible. Let $H_v(X,Y)$ denote the history selected by the recursion,
and set $H_v=\dagger$ if address $v$ is absent or terminal.

For a prescribed winner $c_a$ at node $v_a$, write $c\prec c_a$ when
competitor $c$ has higher tie priority. On the binary outcome space,
the event selecting that winner is
\begin{align}
 \mathcal E_{a,c_a}(X)=\{y\in\{0,1\}^n:\;&
 T_{v_a,c_a}(y)^2>0;\nonumber\\[-2pt]
 &T_{v_a,c_a}(y)^2>T_{v_a,c}(y)^2
                         \quad\text{for all }c\prec c_a;\nonumber\\[-2pt]
 &T_{v_a,c_a}(y)^2\ge T_{v_a,c}(y)^2
                         \quad\text{for all other }c\ne c_a\}.
 \label{eq:local-selection-event}
\end{align}
All competitors belong to $\mathcal C_{v_a}(X)$ for the parent
specified by $h$. The complete target-path event is
\begin{equation}
 E_v(h;X)=\{y:H_v(X,y)=h\}
         =\bigcap_{a=0}^{d}\mathcal E_{a,c_a}(X).
 \label{eq:path-selection-event}
\end{equation}
The equality follows recursively: the root winner fixes the next
parent, whose winning comparison then fixes the following parent.
Because the growing rules are local, splits in other branches do not
affect this event.

For example, testing address $v=LR$ at level three conditions on the
root winner, the winner in its left child, and the winner in that
child's right child. If the first two thresholds are $(j_0,t_0)$ and
$(j_1,t_1)$, the target parent is
$R_{LR}=\{x:x_{j_0}\le t_0,\ x_{j_1}>t_1\}$. Every feature and
eligible cut competes at each of these three nodes. Winners in the
root's right subtree and the $LL$ subtree are unrestricted.

For a fixed design $X$ and history $h$, let $A_h$ and $B_h$ be the
left- and right-child sample index sets at the target split, and let
$R_h=A_h\cup B_h$ be its parent index set. These sets are determined
by the recorded splits and $X$. They index observations, whereas
$R_v$ denotes a region in predictor space.

\subsection{The parent-homogeneity null and two-child model}
\label{sec:regional-targets}
Fix a possible history $h$ at the specified address $v$, with target
sets $R_h,A_h,B_h$ as defined in Section~\ref{sec:selection-event}.
We formulate hypotheses for these fixed sets as restrictions on the
success probabilities in the sampling model of
Section~\ref{sec:tree-notation}, and evaluate inference conditional on
$E_v(h;X)$. This construction applies separately to each possible
history; the observed CART history determines the target being tested.

The parent-homogeneity null is
\begin{equation}
 H_{0,h}^{\mathrm{hom}}:\quad
 \exists q\in(0,1)\text{ such that }p_i=q\quad(i\in R_h).
 \label{eq:strong-null}
\end{equation}
The probabilities outside $R_h$ may differ freely, subject to the
independence assumption in Section~\ref{sec:tree-notation}. A homogeneous
parent below the root can therefore occur in a tree with signal in
ancestors or other branches. The null is defined for this history's
target and is not imposed simultaneously on all candidate parents.
At $q=0$ or $1$, a positive-gain split within the parent has probability
zero. For a fixed $h$, we henceforth abbreviate $A_h,B_h,R_h$ as
$A,B,R$ and write $H_{0,R}^{\mathrm{hom}}$ for the same null.

To characterize the conditional alternative law and power, we also
consider the two-child model
\begin{equation}
 p_i=
 \begin{cases}
  p_A,&i\in A,\\
  p_B,&i\in B,
 \end{cases}
 \qquad 0<p_A,p_B<1.
 \label{eq:two-child-model}
\end{equation}
These restrictions concern the Bernoulli sampling law before conditioning
on the selected history. Within this model,
\eqref{eq:strong-null} is equivalent to $p_A=p_B$, and rejection
provides evidence of unequal child risks. The comparison concerns
immediate siblings of an internal node; either child may be split
further by the growing algorithm.

Section~\ref{sec:homogeneity} constructs the test conditional on $X$,
$E_v(h;X)$, the parent success total $M_R=\sum_{i\in R}Y_i$, and the
outside labels $Y_{R^c}$. The path event fixes the target split and its
ancestors without fixing their winning signs or subsequent splits.
Although outside labels are held fixed, every ancestor comparison
remains part of the selection check. The resulting guarantee concerns
a specified address and its selected path; it is nodewise rather than
simultaneous over the fitted tree.
\section{Exact conditional inference under parent homogeneity}
\label{sec:homogeneity}
To assess a selected split, we compare its observed child outcomes with
alternative binary outcome configurations that would have led CART to
select the same split and its ancestors. This comparison accounts for
the search that produced the groups. The task is to determine the null
distribution over these alternative configurations. We specialize
classical fixed-total conditioning and the selective exponential-family
framework of \citet{fithian2014} to the Gini CART path event in
Section~\ref{sec:selection-event}. The resulting law explicitly includes
competition across predictors and cutpoints at the target and its ancestors.

Under parent homogeneity, conditioning on the number of successes in
the parent provides a simple starting point: every arrangement of those
successes has the same probability, regardless of the unknown common
risk. Restricting these arrangements to those that reproduce the selected
path then gives the required conditional reference distribution. We first
derive this distribution, use it to construct two-sided tests, and then
describe its behavior under the two-child model with unequal risks.
Section~\ref{sec:computation} provides a Monte Carlo implementation when
direct enumeration is impractical.

\subsection{The conditional null distribution}
Fix the design $X$ and a feasible target-path history $h$ at address $v$,
as defined in Section~\ref{sec:selection-event}. Use the fixed target
sets $A=A_h$, $B=B_h$, and $R=R_h$ introduced there, and write
$E=E_v(h;X)$ for the selection event. In addition to $E$, we
condition on the parent success total $M_R=\sum_{i\in R}Y_i=M$ and
the outside labels $O=Y_{R^c}=o$. Denote this conditioning information
by $\mathcal D=(X,E,M_R=M,O=o)$ and assume its event has positive
probability. Holding $o$ fixed allows the probabilities outside the
parent to remain unspecified.

The admissible configurations within the parent form the set
\[
 \mathcal F=\{y_R\in\{0,1\}^{|R|}:\textstyle\sum_{i\in R}y_i=M,
                         \ (y_R,o)\in E\},
\]
where $(y_R,o)$ denotes the full outcome vector with these labels in
their original index positions. We call $\mathcal F$ the selected
\emph{fiber}. It retains exactly the fixed-total configurations that
satisfy every winning comparison along the target path.

\begin{samepage}
\begin{theorem}[Exact conditional law and selective validity]
\label{thm:exact}
Under $H_{0,R}^{\mathrm{hom}}$, conditional on $\mathcal D$, the
outcomes in $R$ are uniform on $\mathcal F$:
\[
 \PP_0(Y_R=y_R\mid\mathcal D)=\frac{1}{|\mathcal F|},
 \qquad y_R\in\mathcal F.
\]
For any statistic $V$ on this fiber, its inclusive upper-tail probability
is conditionally super-uniform: its rejection probability at threshold
$\alpha$ is at most $\alpha$ for every $\alpha\in[0,1]$.
Independent uniform randomization within equal values of $V$ gives a
conditionally uniform p-value.
\end{theorem}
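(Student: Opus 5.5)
The plan is to compute the conditional probability directly from the Bernoulli product law. Under $H_{0,R}^{\mathrm{hom}}$ the joint probability of a full outcome vector $y=(y_R,o)$ factorizes as
\[
 \PP_0(Y=y)=q^{\sum_{i\in R}y_i}(1-q)^{|R|-\sum_{i\in R}y_i}\prod_{i\notin R}p_i^{o_i}(1-p_i)^{1-o_i}.
\]
On the event $\{M_R=M,\ O=o\}$ this equals $q^M(1-q)^{|R|-M}\pi(o)$, where $\pi(o)$ collects the outside factors. It is therefore constant across all $y_R$ with $\sum_{i\in R}y_i=M$. The conditioning event $\{Y\in E,\ M_R=M,\ O=o\}$ is exactly $\{Y_R\in\mathcal F,\ O=o\}$, by the definition of $\mathcal F$ and because $E$ is a subset of $\{0,1\}^n$. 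Since $X$ is fixed throughout, conditioning on it is implicit. The conditional probability of each $y_R\in\mathcal F$ is then the common constant divided by $|\mathcal F|$ times that constant. Because $0<q<1$, the constant is positive, and positivity of the conditioning event gives $|\mathcal F|\ge1$. This yields $1/|\mathcal F|$, which is free of $q$ and of the outside risks. The only point requiring care is that $E$ depends on $y$ only through $(y_R,o)$, so the fiber is well defined. This is immediate from \eqref{eq:path-selection-event}, since $E$ is defined as a set of full vectors.

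For super-uniformity, I take $V$ to be any function on $\mathcal F$ and define the inclusive p-value $P(y_R)=|\{z\in\mathcal F:V(z)\ge V(y_R)\}|/|\mathcal F|$, which is the conditional upper-tail probability $\PP_0(V\ge v\mid\mathcal D)$ evaluated at $v=V(y_R)$. I then apply the standard survival-function argument. Let $G(v)=\PP(V\ge v)$ under the uniform law. Fix $\alpha$ and let $v^*$ be the smallest attained value with $G(v^*)\le\alpha$; if no such value exists, the rejection probability is $0$. Since $G$ is nonincreasing in $v$, the event $\{G(V)\le\alpha\}$ equals $\{V\ge v^*\}$, and its probability is $G(v^*)\le\alpha$.

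For the randomized p-value, set $P^{\mathrm{rand}}=\PP(V>V(Y_R))+U\,\PP(V=V(Y_R))$ with $U\sim\mathrm{Unif}(0,1)$ independent of $Y_R$ given $\mathcal D$. I would show exact uniformity by checking the CDF on the ranked blocks of distinct values $v_1>\cdots>v_m$, which have masses $w_1,\dots,w_m$. On block $k$, $P^{\mathrm{rand}}$ is uniform on $[\sum_{l<k}w_l,\sum_{l\le k}w_l)$ with total mass $w_k$. These intervals tile $[0,1)$, so the mixture is $\mathrm{Unif}(0,1)$. Equivalently, one can view this as ranking $Y_R$ among the fiber elements with uniformly random tie-breaking, which gives a uniform continuous rank.

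I do not expect a real obstacle. The only delicate step is bookkeeping: verifying that conditioning on $E$ adds no likelihood weight beyond membership in $\mathcal F$. Ancestor comparisons involve labels outside $R$, but these labels are fixed at $o$, so the event remains a deterministic indicator in $y_R$. Tie-breaking by priority is already encoded in $E$ and likewise does not affect the argument.
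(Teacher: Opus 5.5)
Your proposal is correct and follows the paper's proof essentially step for step: on the slice $\{M_R=M,\,O=o\}$ the Bernoulli likelihood is the common factor $q^M(1-q)^{|R|-M}$ times the outside factor, so normalizing over $\mathcal F$ gives uniformity free of $q$ and of the outside risks, and the inclusive and randomized claims follow, as in the paper, from ordering the values of $V$ and spreading each atom uniformly over its tail-probability interval. One cosmetic point: positivity of the outside factor follows from the assumed positive probability of the conditioning event rather than from $0<q<1$, and you already invoke that assumption.
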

\end{samepage}
\begin{proof}
Under the null, each parent configuration with $M$ successes has
probability $q^M(1-q)^{|R|-M}$. Independence makes this factor unchanged
by conditioning on the outside labels. Restricting to the nonempty
set $\mathcal F$ and normalizing cancels the common factor, proving
uniformity. Ordering the possible values of $V$ gives the inclusive
tail bound. Spreading each probability atom uniformly over its
corresponding tail-probability interval gives the randomized result.
\end{proof}

We use the left-child success count $K=\sum_{i\in A}Y_i$ as the test
statistic. With $a=|A|$, $b=|B|$, and $M$ fixed, this count determines
the observed risk difference:
\[
 \overline Y_A-\overline Y_B=\frac{K}{a}-\frac{M-K}{b}.
\]
Let $C_{\mathcal D}(k)=\#\{y_R\in\mathcal F:K(y_R)=k\}$ count the
admissible configurations with left-child count $k$. Theorem~\ref{thm:exact}
yields
\begin{equation}
 \PP_0(K=k\mid\mathcal D)
 =\frac{C_{\mathcal D}(k)}{\sum_j C_{\mathcal D}(j)}.
 \label{eq:strong-count-law}
\end{equation}
Without the selection restriction, these counts are
$\binom{a}{k}\binom{b}{M-k}$, giving the hypergeometric reference law
used in Fisher's exact test. CART selection replaces them with the
counts of configurations reproducing the target path. The resulting
reference law therefore incorporates competition across features and
cutpoints at every ancestor as well as at the tested split.

\subsection{Two-sided selective tests}
For an observed count $k$, an inclusive, equal-tailed two-sided p-value is
\begin{equation}
 p_{\mathrm{inc}}(k)=\min\{1,\,
  2\min[\PP_0(K\le k\mid\mathcal D),\PP_0(K\ge k\mid\mathcal D)]\}.
 \label{eq:inclusive-exact}
\end{equation}
Here \emph{inclusive} means that the entire probability mass at the
observed count is included in both tails. A randomized counterpart uses
\begin{equation}
 r_0(k,U)=\PP_0(K>k\mid\mathcal D)
              +U\PP_0(K=k\mid\mathcal D),\qquad
 p_{\mathrm{rand},0}=2\min\{r_0,1-r_0\},
 \label{eq:randomized-exact}
\end{equation}
where $U\sim\operatorname{Uniform}(0,1)$ is independent of the outcomes.
Randomization spreads the mass at the observed count over its rank
interval. These procedures assess both directions of the child risk
difference using the same conditional count law.

For either procedure, reject when $p\le\alpha$. Each inclusive tail
is super-uniform by Theorem~\ref{thm:exact}, so a union bound gives
conditional rejection probability at most $\alpha$ for
\eqref{eq:inclusive-exact}. For \eqref{eq:randomized-exact}, $r_0$
is uniform under the null, and therefore so is $2\min\{r_0,1-r_0\}$.
The randomized test has conditional rejection probability exactly
$\alpha$. Averaging over the parent total and outside labels preserves
these respective guarantees conditional on $X$ and $E$.

\subsection{The conditional law under unequal child risks}
The two-child model in \eqref{eq:two-child-model} also gives a tractable
conditional distribution under alternatives. Define the log odds ratio
$\theta=\operatorname{logit}(p_A)-\operatorname{logit}(p_B)$, with
$\theta=0$ corresponding to the null. Keep the same conditioning
information $\mathcal D$ and configuration counts $C_{\mathcal D}(k)$.

\begin{proposition}[Conditional count law under the two-child model]
Under model~\eqref{eq:two-child-model},
\[
 \PP_\theta(K=k\mid\mathcal D)=
 \frac{C_{\mathcal D}(k)e^{\theta k}}
      {\sum_j C_{\mathcal D}(j)e^{\theta j}}.
\]
\end{proposition}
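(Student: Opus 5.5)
The plan is to repeat the argument of Theorem~\ref{thm:exact}, but now write the joint probability of a parent configuration under model~\eqref{eq:two-child-model} explicitly and see which factors survive conditioning. Fix $\mathcal D$ and take any $y_R\in\{0,1\}^{|R|}$ with $\sum_{i\in R}y_i=M$. Write $k=K(y_R)$, so that $B$ contains $M-k$ successes. Independence within $R$ gives
\[
 \PP_\theta(Y_R=y_R)=p_A^{k}(1-p_A)^{a-k}\,p_B^{M-k}(1-p_B)^{b-(M-k)}.
\]
Factoring out $(1-p_A)^a(1-p_B)^b$ rewrites this as
\[
 (1-p_A)^a(1-p_B)^b\Bigl(\tfrac{p_B}{1-p_B}\Bigr)^{M}e^{\theta k}.
\]
Every factor except $e^{\theta k}$ depends only on $(a,b,M,p_A,p_B)$, and these are constant across the fixed-total configurations.

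Next we condition on the outside labels and on the selection event. Because $Y_{R^c}$ is independent of $Y_R$, conditioning on $O=o$ does not change the law of $Y_R$; this step is the same as in the proof of Theorem~\ref{thm:exact}. Given $O=o$, the event $E\cap\{M_R=M\}$ is, as a set of $y_R$, exactly the fiber $\mathcal F$, and by assumption it has positive probability. Restricting to $\mathcal F$ and normalizing cancels the common factor, so
\[
 \PP_\theta(Y_R=y_R\mid\mathcal D)=\frac{e^{\theta K(y_R)}}{\sum_{y'\in\mathcal F}e^{\theta K(y')}},\qquad y_R\in\mathcal F.
\]
Summing over the configurations in $\mathcal F$ with $K=k$ gives the numerator $C_{\mathcal D}(k)e^{\theta k}$. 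Grouping the denominator by the value of $K$ turns it into $\sum_j C_{\mathcal D}(j)e^{\theta j}$, which is the stated formula. At $\theta=0$ the formula reduces to \eqref{eq:strong-count-law}, as it should.

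No step here is substantive. The one point that needs care is to confirm that the selection event, which involves the ancestor comparisons and therefore the labels outside $R$, depends on $y_R$ only through the set $\mathcal F$ once $o$ is fixed. If that holds, conditioning on $E$ is just a restriction of the support and cannot introduce any further dependence on $p_A$ or $p_B$. This is immediate from the definition of $\mathcal F$: for fixed $o$, the event $E$ is a deterministic set of $y_R$.
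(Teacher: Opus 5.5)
Your proposal is correct and follows essentially the same route as the paper. Like the paper, you factor the inside likelihood as $(1-p_A)^a(1-p_B)^b\{p_B/(1-p_B)\}^M e^{\theta k}$, note that everything but $e^{\theta k}$ is common on the fixed-total space, remove the outside labels by independence, restrict to $\mathcal F$, and normalize after grouping configurations by $K$; your intermediate configuration-level law $e^{\theta K(y_R)}/\sum_{y'\in\mathcal F}e^{\theta K(y')}$ is exactly the $\pi_\theta$ the paper later records in the proof of Proposition~\ref{prop:information}.
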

\begin{proof}
With $M_R=M$, a label configuration with $K=k$ has inside likelihood
\[
 (1-p_A)^a(1-p_B)^b
 \left\{\frac{p_B}{1-p_B}\right\}^{M}e^{\theta k}.
\]
The factors preceding $e^{\theta k}$ are common on the fixed-total space.
Condition on outside labels, restrict to $E$, sum over the
$C_{\mathcal D}(k)$ configurations, and normalize.
\end{proof}
The selected count law is thus a one-parameter exponential family.
Selection determines the weights $C_{\mathcal D}(k)$, while the two
child risks enter the conditional distribution only through $\theta$.
This representation underlies the conditional
information and power results in Section~\ref{sec:information}.
\section{Finite-run computation and conditional information}\label{sec:computation}
Section~\ref{sec:homogeneity} gives an exact conditional reference
distribution, but enumerating all admissible outcome configurations can
be computationally expensive. Section~\ref{sec:mc-procedures} develops Monte Carlo tests
that use the same null model and conditioning event while retaining
finite-sample validity for a prespecified computational budget. We
construct reversible label updates and arrange the simulation runs so
that the observed and reference outcomes are exchangeable under the
conditional null. This construction yields inclusive and tie-randomized
tests. We establish their validity and describe how their rejection
rates are summarized in the simulations.
Finite-run validity follows from exchangeability under the conditional
null and does not require the chain to mix over the entire selected
fiber. This guarantee is distinct from numerical convergence to the
full conditional tail probabilities in Section~\ref{sec:homogeneity}.
In particular, increasing the number of reference endpoints at a fixed
chain length need not recover those probabilities.

Section~\ref{sec:information} examines how conditioning and computation affect power.
Within the two-child model, we characterize the information retained
by the conditional count distribution and the power of one-sided tests.
We then study how the label updates explore the admissible
configurations, including a case in which single swaps cannot connect
the entire selected fiber. Together, these results provide a basis for
interpreting the power and computational diagnostics reported in
Section~\ref{sec:simulations}.

\subsection{Construction of the Monte Carlo tests}
\label{sec:mc-procedures}
We apply the parallel Monte Carlo method of \citet{besag1989}, using
the exchangeability formulation described by \citet{howes2023}.
The method supplies the finite-run validity argument; here we specify
a reversible label-swap kernel that preserves the full Gini CART path
event and derive the inclusive and randomized count tests it produces.

Throughout this section, fix a positive-probability conditioning event
$\mathcal D$ from Section~\ref{sec:homogeneity}. Put $m=|R|$ and
$\pi_0(y)=|\mathcal F|^{-1}$ for $y\in\mathcal F$. The selected
positive-gain split ensures $1\le M\le m-1$. A transition proposes
exchanging one uniformly chosen success and one uniformly chosen failure
within $R$. The exchange is accepted if the resulting full outcome vector,
with outside labels $o$ restored, still satisfies $E$; otherwise the state
is unchanged. Thus, writing $d_H$ for Hamming distance, the transition
matrix on $\mathcal F$ is
\begin{equation}
 Q(y,z)=
 \begin{cases}
  \{M(m-M)\}^{-1}\1\{d_H(y,z)=2\},&z\ne y,\\[2pt]
  1-\displaystyle\sum_{w\in\mathcal F\setminus\{y\}}Q(y,w),&z=y.
 \end{cases}
 \label{eq:swap-kernel}
\end{equation}
A pair of fixed-total configurations at Hamming distance two determines
exactly one success/failure exchange. The off-diagonal entries are
therefore symmetric, and
\begin{equation}
 \pi_0(y)Q(y,z)=\pi_0(z)Q(z,y),\qquad y,z\in\mathcal F.
 \label{eq:swap-balance}
\end{equation}
Every proposal, including a rejected one, counts as a transition.
All target and ancestor comparisons are checked when evaluating $E$.

Choose integers $B,L\ge1$ and the transition rule as functions of
$\mathcal D$ only: for a fixed $\mathcal D$, the same $Q,B,L$ must
apply to every possible observed configuration in $\mathcal F$.
In the experiments, $B$ and $L$ are fixed constants. Write $P=Q^L$.
Starting from the observed parent labels $Y_0$, take $L$ transitions
to a common intermediate state $H$. From $H$, generate $B$ further
chains of length $L$, using independent random draws for each chain
and independently of the initial chain. Their endpoints are
$Y_1,\ldots,Y_B$. Conditional on $\mathcal D$, the sampling rule is
\begin{equation}
 H\mid Y_0=y_0\sim P(y_0,\cdot),\qquad
 \PP(Y_1=y_1,\ldots,Y_B=y_B\mid H=z,Y_0=y_0)
       =\prod_{b=1}^{B}P(z,y_b).
 \label{eq:hub-sampling-rule}
\end{equation}
Each $Y_b$ denotes a vector of parent labels; the outside labels stay
fixed. Use these same endpoints for both procedures below. Define
$K_b=K(Y_b)$, $N=B+1$, and
\[
 g=\sum_{b=1}^B\1\{K_b>K_0\},\qquad
 e=1+\sum_{b=1}^B\1\{K_b=K_0\}.
\]
Thus $g$ counts endpoints strictly above the observed count, while $e$
counts its entire tied block, including the observation itself.
In particular, $e\ge1$.

The inclusive procedure includes all endpoints tied with $K_0$ in each
tail, including $K_0$ itself:
\begin{equation}
 p_+=\frac{1+\sum_{b=1}^B\1\{K_b\ge K_0\}}{N}=\frac{g+e}{N},\qquad
 p_-=\frac{1+\sum_{b=1}^B\1\{K_b\le K_0\}}{N}=\frac{N-g}{N}.
 \label{eq:inclusive-mc-tails}
\end{equation}
Report $p_{\mathrm{MC,inc}}=\min\{1,2\min(p_+,p_-)\}$ and reject when
$p_{\mathrm{MC,inc}}\le\alpha$. No additional tie-breaking random number
is used: the decision is fixed given the observed data and the endpoint
sample. This procedure still contains Monte Carlo randomness from generating
those endpoints. Including the whole tied block makes the test conservative,
potentially substantially so for discrete outcomes.

For the randomized procedure, draw one
$U\sim\operatorname{Uniform}(0,1)$ independently of the data and all
Monte Carlo chains after generating the reference endpoints. Set
\begin{equation}
 r=\frac{g+Ue}{N},\qquad
 p_{\mathrm{rand}}=2\min(r,1-r),
 \label{eq:randomized-mc}
\end{equation}
and reject when $p_{\mathrm{rand}}\le\alpha$. Conditional on the endpoint
counts, $r$ is uniform on the rank interval $[g/N,(g+e)/N]$ occupied by
the observed tie block. The same $U$ determines the two complementary tails;
the tails are not randomized separately. This is randomization of the final
test, not of CART's split selection or the observed outcomes. A single
draw is used for each reported test and must not be redrawn to obtain a
preferred conclusion. Replacing $U$ by $1/2$ would give a mid-rank procedure,
which does not have the exact-uniform guarantee below.

\begin{proposition}[Finite-run conditional validity]\label{prop:finite-mc}
Assume $H_{0,R}^{\mathrm{hom}}$ and the sampling construction
\eqref{eq:swap-kernel}--\eqref{eq:hub-sampling-rule}, with $Q,B,L$
fixed conditional on $\mathcal D$ and the final $U$ independent of
all data and chains. For every $\alpha\in[0,1]$,
\begin{equation}
 \PP_0(p_{\mathrm{rand}}\le\alpha\mid\mathcal D)=\alpha,\qquad
 \PP_0(p_{\mathrm{MC,inc}}\le\alpha\mid\mathcal D)\le\alpha.
 \label{eq:finite-mc-validity}
\end{equation}
The conclusion also holds for any fixed transition matrix on
$\mathcal F$ reversible with respect to $\pi_0$, using the same
construction. Irreducibility, aperiodicity, and a mixing-time bound
are not required.
\end{proposition}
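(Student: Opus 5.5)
The argument has two stages. First we show that the observed and reference endpoints are exchangeable under the conditional null. Then we convert exchangeability into the stated rank-based guarantees. By Theorem~\ref{thm:exact}, under $H_{0,R}^{\mathrm{hom}}$ and conditional on $\mathcal D$, the observed parent labels satisfy $Y_0\sim\pi_0$, the uniform law on $\mathcal F$. By \eqref{eq:swap-balance}, $Q$ is reversible with respect to $\pi_0$, and hence so is $P=Q^L$. The same holds for any fixed $\pi_0$-reversible transition matrix in place of $Q$, which gives the generalization stated at the end.

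Reversibility gives stationarity, so $H\sim\pi_0$. It also gives that the joint law of $(Y_0,H)$ equals the joint law of $(H',Y')$ with $H'\sim\pi_0$ and $Y'\mid H'\sim P(H',\cdot)$. This is just $\pi_0(y_0)P(y_0,z)=\pi_0(z)P(z,y_0)$. Combining with \eqref{eq:hub-sampling-rule}, the joint probability is
\[
\PP_0(Y_0=y_0,H=z,Y_1=y_1,\ldots,Y_B=y_B\mid\mathcal D)=\pi_0(z)\prod_{b=0}^{B}P(z,y_b).
\]
This expression is symmetric in $(y_0,\ldots,y_B)$. Summing over $z$, the vector $(Y_0,\ldots,Y_B)$ is exchangeable conditional on $\mathcal D$, and therefore so is $(K_0,\ldots,K_B)$. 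The fact that $Q,B,L$ depend only on $\mathcal D$ is essential here. It ensures the same kernel is used whichever configuration in $\mathcal F$ is observed, so the factorization holds with a single $P$. Irreducibility and mixing never enter.

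The second stage is a standard rank argument. Let $\rho$ be the randomized rank of $K_0$ among $N$ exchangeable values, with ties broken by the independent $U$. Concretely, $r=(g+Ue)/N$.

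\emph{Randomized test.} Our main claim is that $r\sim\mathrm{Uniform}(0,1)$ exactly. We condition on the multiset of values $\{K_0,\ldots,K_B\}$. By exchangeability, $K_0$ is equally likely to occupy each of the $N$ positions, so the tie block containing a value of multiplicity $e$ is hit with probability $e/N$. Given that block, $r$ is uniform on an interval of length $e/N$. These intervals tile $[0,1]$, so the conditional density of $r$ is $(e/N)\cdot(N/e)=1$ on each interval. Hence $r$ is uniform, and so is $p_{\mathrm{rand}}=2\min(r,1-r)$, because $x\mapsto 2\min(x,1-x)$ preserves Uniform$(0,1)$. This gives the equality in \eqref{eq:finite-mc-validity}.

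\emph{Inclusive test.} Here $p_+\ge r$ and $p_-\ge 1-r$ hold pointwise. This follows from $(g+e)/N\ge(g+Ue)/N$ and $(N-g)/N\ge 1-(g+Ue)/N$. Therefore $p_{\mathrm{MC,inc}}\ge p_{\mathrm{rand}}$, and
\[
\PP_0(p_{\mathrm{MC,inc}}\le\alpha\mid\mathcal D)\le\PP_0(p_{\mathrm{rand}}\le\alpha\mid\mathcal D)=\alpha.
\]
The truncation at $1$ does not affect this bound.

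I expect the main obstacle to be stating the exchangeability step carefully in the conditional setting. Specifically, one must check that conditioning on $\mathcal D$ leaves the chains' law given $Y_0$ unchanged, since the auxiliary randomness is independent of the outside labels. One must also check that $U$ is independent of $(K_0,\ldots,K_B)$. The tie-block uniformity computation in the randomized step is routine by comparison.
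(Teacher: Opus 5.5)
Your proposal is correct and follows essentially the paper's proof: reversibility of $P=Q^L$ gives the hub factorization $\pi_0(z)\prod_{b=0}^{B}P(z,y_b)$ and hence exchangeability. Uniformity of $r$ then follows by tiling $[0,1]$ with tie-block intervals given the multiset of counts, and the inclusive bound from the pointwise domination $p_{\mathrm{MC,inc}}\ge p_{\mathrm{rand}}$. The only difference is cosmetic: you marginalize over $H$ before conditioning on the multiset, whereas the paper conditions on $(H,\mathcal M)$ jointly, using conditional i.i.d.-ness of the $B+1$ endpoints; either suffices.
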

\begin{proof}
All probabilities in the proof condition on $\mathcal D$. Reversibility
of $Q$ implies reversibility of $P=Q^L$. Indeed, with
$D_\pi=\operatorname{diag}\{\pi_0(y):y\in\mathcal F\}$,
\[
 D_\pi Q=Q^\top D_\pi
 \quad\Longrightarrow\quad
 D_\pi Q^L=(Q^\top)^L D_\pi
 \quad\Longrightarrow\quad
 \pi_0(y)P(y,z)=\pi_0(z)P(z,y).
\]
By Theorem~\ref{thm:exact}, $Y_0$ has law $\pi_0$. The joint mass of
the observed state, the intermediate state, and all endpoints is therefore
\begin{align}
 &\PP_0(Y_0=y_0,H=z,Y_1=y_1,\ldots,Y_B=y_B\mid\mathcal D)\nonumber\\
 &\qquad=\pi_0(y_0)P(y_0,z)\prod_{b=1}^{B}P(z,y_b)
         =\pi_0(z)\prod_{b=0}^{B}P(z,y_b).
 \label{eq:hub-factorization}
\end{align}
Summing over $y_0,\ldots,y_B$ gives
$\PP_0(H=z\mid\mathcal D)=\pi_0(z)>0$. Consequently,
\begin{equation}
 \PP_0(Y_0=y_0,\ldots,Y_B=y_B\mid H=z,\mathcal D)
       =\prod_{b=0}^{B}P(z,y_b).
 \label{eq:conditional-iid-endpoints}
\end{equation}
Thus the $B+1$ endpoints, including the observation, are independent
and identically distributed conditional on $(H,\mathcal D)$.

Let $\mathcal M$ be the unordered multiset of the $N$ counts. For a
realized multiset, let $t_1<\cdots<t_J$ be its distinct values, with
multiplicities $n_1,\ldots,n_J$, and put $s_j=\sum_{\ell>j}n_\ell$.
Exchangeability gives
\[
 \PP_0(K_0=t_j\mid H,\mathcal D,\mathcal M)=\frac{n_j}{N}.
\]
On $\{K_0=t_j\}$, $(g,e)=(s_j,n_j)$, so the independent $U$
makes $r$ uniform on
$I_j=[s_j/N,(s_j+n_j)/N]$. These intervals partition $[0,1]$
up to their endpoints. For every Borel set $A\subseteq[0,1]$,
writing $\lambda$ for Lebesgue measure,
\begin{align*}
 \PP_0(r\in A\mid H,\mathcal D,\mathcal M)
 &=\sum_{j=1}^{J}\frac{n_j}{N}
       \frac{\lambda(A\cap I_j)}{n_j/N}
  =\lambda(A).
\end{align*}
Averaging over $(H,\mathcal M)$ proves $r\mid\mathcal D\sim
\operatorname{Uniform}(0,1)$. Hence
\[
 \PP_0\{2\min(r,1-r)\le\alpha\mid\mathcal D\}
 =\lambda\bigl([0,\alpha/2]\cup[1-\alpha/2,1]\bigr)=\alpha.
\]
Finally, using the same $U$ to compare the two procedures, pointwise
\[
 p_+-r=\frac{(1-U)e}{N}\ge0,\qquad
 p_--(1-r)=\frac{Ue}{N}\ge0.
\]
Since $2\min(r,1-r)\le1$, these inequalities imply
$p_{\mathrm{MC,inc}}\ge p_{\mathrm{rand}}$. The inclusion
$\{p_{\mathrm{MC,inc}}\le\alpha\}\subseteq
\{p_{\mathrm{rand}}\le\alpha\}$ proves the second part of
\eqref{eq:finite-mc-validity}.
\end{proof}
The finite-run guarantee uses the joint distribution of the observation and the simulated
endpoints. The conditional independence in
\eqref{eq:conditional-iid-endpoints} follows from the common
intermediate-state construction. Our swap implementation checks Gini
selection using integer comparisons within its documented input range.

In the simulation tables, the \emph{inclusive} columns average the indicators
$\1\{p_{\mathrm{MC,inc}}\le\alpha\}$. The \emph{randomized} columns
integrate out only the final uniform $U$ analytically, to reduce Monte Carlo
noise. Specifically, let $\ell=g/N$, $h=(g+e)/N$, and
$[x]_+=\max(x,0)$. Each tested target contributes
\begin{equation}
 q_\alpha(g,e)
 =\PP_U(p_{\mathrm{rand}}\le\alpha\mid K_0,\ldots,K_B)
 =\frac{[\min(h,\alpha/2)-\ell]_+
       +[h-\max(\ell,1-\alpha/2)]_+}{h-\ell}.
 \label{eq:integrated-randomized-rejection}
\end{equation}
This is the fraction of the rank interval lying in the two rejection
tails. The reported randomized rate is the average of $q_\alpha(g,e)$
over the eligible tests, not the proportion for which $q_\alpha(g,e)$
is at most $\alpha$: $q_\alpha$ is a rejection probability, not a p-value.
Its average estimates the rejection rate of the single-$U$ procedure in
\eqref{eq:randomized-mc}. Data generation and endpoint sampling remain
random. In Supplement~\ref{sec:strong-under-weak}, multiple nodes from one
training dataset are handled using dataset-clustered Monte Carlo standard
errors.

\subsection{Computation and statistical power}
\label{sec:information}
The conditional two-child model provides an analytical benchmark for
assessing the power of the Monte Carlo calculation. The information
identity and one-sided power ordering below are standard consequences
of its one-parameter exponential-family form. Set
$\mathcal K=\{k:C_{\mathcal D}(k)>0\}$ and
$Z_{\mathcal D}(\theta)=\sum_{k\in\mathcal K}C_{\mathcal D}(k)e^{\theta k}$.
All quantities in the following statements fix the same $\mathcal D$.

\begin{proposition}[Conditional information and a degenerate fiber]
\label{prop:information}
Fix $\mathcal D$ and assume the two-child model. For each finite
$\theta\in\mathbb R$, the conditional score and Fisher information are
\begin{equation}
 \partial_\theta\log\PP_\theta(K\mid\mathcal D)
      =K-\EE_\theta(K\mid\mathcal D),\qquad
 I_{\mathcal D}(\theta)=\Var_\theta(K\mid\mathcal D).
 \label{eq:conditional-information}
\end{equation}
The information is zero if and only if $\mathcal K$ is a singleton.
In that case, the full conditional distribution of $Y_R$ is independent
of $\theta$, and every conditional level-$\alpha$ test has power at
most $\alpha$ for every $\theta$.
\end{proposition}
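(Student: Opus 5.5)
The plan is to start from the exponential-family form given by the proposition on the conditional count law under the two-child model. For $k\in\mathcal K$, the log-likelihood is
\[
\log\PP_\theta(K=k\mid\mathcal D)=\log C_{\mathcal D}(k)+\theta k-\log Z_{\mathcal D}(\theta).
\]
Because $\mathcal K$ is a finite set, $Z_{\mathcal D}$ is a finite sum of exponentials. It is therefore smooth, and derivatives may be taken term by term. Differentiating once gives the score $K-Z'_{\mathcal D}(\theta)/Z_{\mathcal D}(\theta)$, and $Z'/Z=\sum_k k\,C_{\mathcal D}(k)e^{\theta k}/Z=\EE_\theta(K\mid\mathcal D)$. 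The Fisher information is the variance of the score, which is $\Var_\theta(K\mid\mathcal D)$. Equivalently, it is minus the expected second derivative, $(\log Z)''(\theta)=\EE_\theta K^2-(\EE_\theta K)^2$. Either route establishes \eqref{eq:conditional-information}.

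Next comes the zero-information characterization. Every $k\in\mathcal K$ has strictly positive probability $C_{\mathcal D}(k)e^{\theta k}/Z_{\mathcal D}(\theta)$ for every finite $\theta$. The variance is therefore zero exactly when $K$ is almost surely constant, that is, when $\mathcal K$ is a singleton. The converse is immediate, since a singleton support gives a degenerate law.

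For the final claim I would return to the label level rather than the count level. The key observation is that the likelihood computation in the proof of that proposition is pointwise in configurations. Each $y_R\in\mathcal F$ has conditional probability proportional to $e^{\theta K(y_R)}$, because the factor $(1-p_A)^a(1-p_B)^b\{p_B/(1-p_B)\}^M$ is common on the fixed-total space. Normalizing gives $\PP_\theta(Y_R=y_R\mid\mathcal D)=e^{\theta K(y_R)}/\sum_{z\in\mathcal F}e^{\theta K(z)}$. If $K\equiv k^*$ on $\mathcal F$, the exponentials cancel and the law is uniform on $\mathcal F$ for all $\theta$, which coincides with the null law of Theorem~\ref{thm:exact}.

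Now let $\phi$ be any test, possibly randomized through an auxiliary variable independent of the data, such as $U$ or the Monte Carlo chains of Proposition~\ref{prop:finite-mc}. Assume $\phi$ has conditional level $\alpha$, meaning $\EE_0(\phi\mid\mathcal D)\le\alpha$. The auxiliary randomization has a distribution that does not depend on $\theta$ given $\mathcal D$ and $Y_R$. Hence $\EE_\theta(\phi\mid\mathcal D)=\sum_{y}\pi_0(y)\EE(\phi\mid Y_R=y,\mathcal D)=\EE_0(\phi\mid\mathcal D)\le\alpha$.

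I expect the main obstacle to be stating this last step carefully. One must make explicit that the tests allowed may depend on the entire label vector and on external randomization, not only on $K$. One must also note that the Monte Carlo procedures fit this framework because $Q$, $B$, and $L$ are fixed given $\mathcal D$. The remaining steps are routine exponential-family calculus.
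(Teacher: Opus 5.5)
Your proposal is correct and follows essentially the same route as the paper. Both arguments differentiate $\log Z_{\mathcal D}(\theta)$ term by term to get the score and the information, use strict positivity of the count masses for the singleton characterization, and use label-level uniformity on $\mathcal F$ together with representing any randomized test by a $\theta$-free rejection function of $Y_R$. One wording point: the Monte Carlo chains are not independent of the data, since they start from the observed labels; still, the condition you actually use (that their conditional law given $Y_R$ and $\mathcal D$ does not involve $\theta$) is exactly what the argument needs.
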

\begin{proof}
The conditional masses of an individual configuration and of its count
are, respectively,
\begin{equation}
 \pi_\theta(y)=\frac{e^{\theta K(y)}}{Z_{\mathcal D}(\theta)},
 \quad y\in\mathcal F,\qquad
 f_\theta(k)=\frac{C_{\mathcal D}(k)e^{\theta k}}
                   {Z_{\mathcal D}(\theta)},\quad k\in\mathcal K.
 \label{eq:conditional-label-and-count-laws}
\end{equation}
The set $\mathcal K$ is finite and fixed as $\theta$ varies, and all
its masses are positive. For $j=1,2$,
\[
 Z_{\mathcal D}^{(j)}(\theta)
   =\sum_{k\in\mathcal K}k^j C_{\mathcal D}(k)e^{\theta k}.
\]
Writing $\psi_{\mathcal D}(\theta)=\log Z_{\mathcal D}(\theta)$ gives
\begin{align*}
 \psi_{\mathcal D}'(\theta)
  &=\frac{Z_{\mathcal D}'(\theta)}{Z_{\mathcal D}(\theta)}
    =\sum_k k f_\theta(k)=:\mu_\theta,\\
 \psi_{\mathcal D}''(\theta)
  &=\frac{Z_{\mathcal D}''(\theta)}{Z_{\mathcal D}(\theta)}
        -\left\{\frac{Z_{\mathcal D}'(\theta)}
                      {Z_{\mathcal D}(\theta)}\right\}^2
    =\sum_k(k-\mu_\theta)^2 f_\theta(k).
\end{align*}
Differentiating $\log f_\theta(k)=\log C_{\mathcal D}(k)
+\theta k-\psi_{\mathcal D}(\theta)$ yields the score $k-\mu_\theta$.
Its expected square is $\psi_{\mathcal D}''(\theta)$, proving
\eqref{eq:conditional-information}. Because every $f_\theta(k)>0$,
this variance vanishes exactly when $\mathcal K$ has one element.

If $\mathcal K=\{k_*\}$, then
$Z_{\mathcal D}(\theta)=|\mathcal F|e^{\theta k_*}$ and
$\pi_\theta(y)=1/|\mathcal F|$ for every $y\in\mathcal F$.
Represent any randomized test by its rejection probability
$\varphi(y)\in[0,1]$ given the observed configuration, averaging over
its auxiliary randomness. For a fixed test rule, $\varphi$ does not
depend on the unknown $\theta$. If it has conditional level $\alpha$,
\[
 \EE_\theta\{\varphi(Y_R)\mid\mathcal D\}
  =\frac{1}{|\mathcal F|}\sum_{y\in\mathcal F}\varphi(y)
  =\EE_0\{\varphi(Y_R)\mid\mathcal D\}\le\alpha
 \qquad(\theta\in\mathbb R).
\]
This also covers tests whose auxiliary randomness generates Monte Carlo
reference outcomes.
\end{proof}

\begin{samepage}
\begin{proposition}[One-sided conditional power ordering]
\label{prop:one-sided}
Fix $\mathcal D$, assume the two-child model with
$|\mathcal K|\ge2$, and let $0<\alpha<1$.
Choose $c\in\mathcal K$ satisfying
\[
 \PP_0(K>c\mid\mathcal D)\le\alpha
       \le\PP_0(K\ge c\mid\mathcal D),
\]
and define
\begin{equation}
 \gamma=\frac{\alpha-\PP_0(K>c\mid\mathcal D)}
                 {\PP_0(K=c\mid\mathcal D)},\qquad
 \varphi_\alpha(k)=\1\{k>c\}+\gamma\1\{k=c\}.
 \label{eq:one-sided-test}
\end{equation}
The test rejects with probability $\varphi_\alpha(K)$, using independent
boundary randomization. It is uniformly most powerful among conditional
level-$\alpha$ tests for $\theta=0$ against $\theta>0$. Its power is
nondecreasing in $\theta$, and it is also a level-$\alpha$ test of
$\theta\le0$. The corresponding lower-tail test has the analogous
properties against $\theta<0$.
\end{proposition}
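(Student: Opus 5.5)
The argument is the classical Neyman--Pearson/Karlin--Rubin route applied to the conditional count family from the preceding proposition, $f_\theta(k)=C_{\mathcal D}(k)e^{\theta k}/Z_{\mathcal D}(\theta)$ on the fixed finite support $\mathcal K$. First reduce an arbitrary conditional test $\varphi(y)$ on $\mathcal F$ to a function of $K$. By \eqref{eq:conditional-label-and-count-laws}, given $K=k$ the configuration is uniform on $\{y\in\mathcal F:K(y)=k\}$ for every $\theta$, so $K$ is sufficient. Hence $\bar\varphi(k)=\EE\{\varphi(Y_R)\mid K=k,\mathcal D\}$ does not depend on $\theta$ and has the same size and power as $\varphi$. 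It therefore suffices to compare $\varphi_\alpha$ with tests $\psi(K)$.

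Next, check that $\varphi_\alpha$ is well defined and has exact size $\alpha$. Because $|\mathcal K|\ge2$ and $0<\alpha<1$, the map $k\mapsto\PP_0(K\ge k\mid\mathcal D)$ runs from $1$ down to a positive value and then to $0$ above $\max\mathcal K$. So some $c\in\mathcal K$ brackets $\alpha$, with $\PP_0(K=c\mid\mathcal D)>0$, and this gives $\gamma\in[0,1]$ and $\EE_0\varphi_\alpha(K)=\alpha$.

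For optimality, fix $\theta_1>0$. The likelihood ratio $f_{\theta_1}(k)/f_0(k)=e^{\theta_1 k}Z_{\mathcal D}(0)/Z_{\mathcal D}(\theta_1)$ is strictly increasing in $k$, so $\varphi_\alpha$ is exactly the Neyman--Pearson test of $\theta=0$ against $\theta=\theta_1$ with threshold $e^{\theta_1 c}$ and randomization on $\{k=c\}$. The standard argument is to use
\[
\sum_{k\in\mathcal K}\{\varphi_\alpha(k)-\psi(k)\}\{f_{\theta_1}(k)-\lambda f_0(k)\}\ge0,
\]
where each summand is nonnegative with $\lambda$ equal to the ratio at $c$. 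Combined with $\EE_0\psi\le\alpha=\EE_0\varphi_\alpha$, this gives $\EE_{\theta_1}\varphi_\alpha\ge\EE_{\theta_1}\psi$. Since $\varphi_\alpha$ does not depend on $\theta_1$, it is UMP against all $\theta>0$.

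For monotone power, show that $\beta(\theta)=\EE_\theta\varphi_\alpha(K)$ is nondecreasing. The cleanest route is that $\beta'(\theta)=\Cov_\theta\{\varphi_\alpha(K),K\}$, by the score identity \eqref{eq:conditional-information} from Proposition~\ref{prop:information}. This covariance is nonnegative because $\varphi_\alpha$ is nondecreasing in $k$; the monotone functions $\varphi_\alpha(K)$ and $K$ of a single variable are positively correlated, as Chebyshev's sum inequality gives. Alternatively, apply the Neyman--Pearson step between any $\theta'<\theta''$, which gives $\beta(\theta'')\ge\beta(\theta')$.

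Level $\alpha$ for $\theta\le0$ then follows: $\beta(\theta)\le\beta(0)=\alpha$. The lower-tail statement follows by replacing $K$ with $-K$, equivalently $\theta$ with $-\theta$, which preserves the exponential-family form with weights $C_{\mathcal D}(-k)$.

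The only real subtlety is the first reduction step: one has to state clearly that the competing tests may depend on the full configuration and on auxiliary randomness, including Monte Carlo tests, and that the conditional uniformity given $K$ makes the reduction to $\bar\varphi(K)$ lossless. Everything else is routine monotone-likelihood-ratio theory on a finite support.
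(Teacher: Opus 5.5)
Your proof is correct and follows the paper's argument. Both use the Neyman--Pearson inequality with $\lambda$ set to the likelihood ratio at $c$ and uniformity over $\theta_1>0$, then $\beta'(\theta)=\Cov_\theta\{K,\varphi_\alpha(K)\}\ge0$ for monotone power and the level on $\theta\le0$, then $(K,\theta)\mapsto(-K,-\theta)$ for the lower tail. The one difference is that you first reduce full-label and Monte Carlo tests to functions of $K$ by sufficiency. The paper instead applies the Neyman--Pearson inequality directly on $\mathcal F$, where $\Lambda_{\theta_1}(y)$ already depends on $y$ only through $K(y)$, so it never needs that reduction.
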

\end{samepage}
\begin{proof}
All expectations below condition on the fixed $\mathcal D$. Since the
null distribution has finite support with positive masses, such a $c$
exists, its denominator in \eqref{eq:one-sided-test} is positive, and
$0\le\gamma\le1$. In particular,
\[
 \EE_0\varphi_\alpha(K)=\PP_0(K>c)+\gamma\PP_0(K=c)=\alpha.
\]
For any fixed $\theta_1>0$, the likelihood ratio on the full set of
configurations is
\[
 \Lambda_{\theta_1}(y)=\frac{\pi_{\theta_1}(y)}{\pi_0(y)}
   =\frac{Z_{\mathcal D}(0)}{Z_{\mathcal D}(\theta_1)}
                      e^{\theta_1 K(y)}.
\]
Set $\lambda_c=Z_{\mathcal D}(0)e^{\theta_1 c}/
Z_{\mathcal D}(\theta_1)$. For any competing test
$\eta:\mathcal F\to[0,1]$ with $\EE_0\eta\le\alpha$,
\[
 \{\eta(y)-\varphi_\alpha(K(y))\}
          \{\Lambda_{\theta_1}(y)-\lambda_c\}\le0
 \quad\text{for every }y\in\mathcal F.
\]
Indeed, the first factor is nonpositive when $K(y)>c$, nonnegative
when $K(y)<c$, and the second factor is zero when $K(y)=c$.
Consequently,
\begin{align*}
 \EE_{\theta_1}\eta-\EE_{\theta_1}\varphi_\alpha(K)
 &=\EE_0\bigl[(\eta-\varphi_\alpha(K))
                     (\Lambda_{\theta_1}-\lambda_c)\bigr]
       +\lambda_c\{\EE_0\eta-\alpha\}\le0.
\end{align*}
The same $c$ and $\gamma$ apply to every $\theta_1>0$, which proves
uniform most powerfulness, including among tests using the full labels.

For $\beta(\theta)=\sum_k\varphi_\alpha(k)f_\theta(k)$, the score
identity gives $f_\theta'(k)=(k-\mu_\theta)f_\theta(k)$. Differentiating
the finite sum and symmetrizing yields
\begin{align*}
 \beta'(\theta)
 &=\Cov_\theta\{K,\varphi_\alpha(K)\}\\
 &=\frac12\sum_{k,\ell\in\mathcal K}f_\theta(k)f_\theta(\ell)
       (k-\ell)\{\varphi_\alpha(k)-\varphi_\alpha(\ell)\}\ge0,
\end{align*}
since $\varphi_\alpha$ is nondecreasing. Thus
$\sup_{\theta\le0}\beta(\theta)=\beta(0)=\alpha$.
Replacing $(K,\theta)$ by $(-K,-\theta)$ proves the lower-tail claim.
\end{proof}

The optimality and power ordering concern a prespecified direction in
the fixed conditional two-child experiment. They do not extend to the
two-sided Monte Carlo test or to comparisons in which the selected tree
changes with signal strength. General departures from homogeneity need
not have a scalar $\theta$.

For the Monte Carlo tests, if $K_0=K_1=\cdots=K_B$, then $(g,e)=(0,N)$,
$p_{\mathrm{MC,inc}}=1$, and $p_{\mathrm{rand}}=2\min(U,1-U)$, giving
$q_\alpha=\alpha$. An identity kernel produces this behavior under
every alternative. Randomization resolves rank discreteness but does
not recover information lost through conditioning or poor exploration.
A randomized rejection rate near 5\% therefore needs to be interpreted
together with power and endpoint-movement diagnostics.

All-tied endpoints do not imply that $\mathcal K$ is a singleton:
other configurations may be inaccessible or not reached within the
run budget. The next result shows that even arbitrarily long single-swap
chains need not explore the whole selected fiber.

\begin{proposition}[A disconnected selected-root fiber]
\label{prop:disconnected}
Let $n$ be divisible by four. Fix the design and a selected root split
$c_*$ with child sizes $a=b=n/2$, and condition on $M=n/2$.
Let $E$ specify this positive-gain root winner with the deterministic
candidate and tie rules of Section~\ref{sec:setting}, without its sign.
If $\mathcal F$ is nonempty, the single-swap matrix
\eqref{eq:swap-kernel} is reducible on $\mathcal F$. Moreover, for
every $y\in\mathcal F$ and integer $\ell\ge0$,
\begin{equation}
 \|Q^\ell(y,\cdot)-\pi_0\|_{\mathrm{TV}}\ge\frac12,
 \label{eq:disconnected-tv}
\end{equation}
where $\|\mu-\nu\|_{\mathrm{TV}}=
\sup_{S\subseteq\mathcal F}|\mu(S)-\nu(S)|$.
\end{proposition}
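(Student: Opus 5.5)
The plan is to find an integer invariant of the left-child count that a single swap can never change across the fiber. At the root we have $N=n$, $a=b=n/2$ and $M=n/2$, so the score of the selected candidate is
\[
 T_{\varnothing,c_*}(y)=\frac{nK-\tfrac n2\cdot\tfrac n2}{\sqrt{n\cdot\tfrac n2\cdot\tfrac n2}}
 =\frac{n\,(K-n/4)}{\sqrt{n^3/4}} .
\]
Since $n$ is divisible by four, $n/4$ is an integer. The requirement $T_{\varnothing,c_*}^2>0$ in \eqref{eq:local-selection-event} therefore excludes exactly the value $K=n/4$. This splits the fiber as $\mathcal F=\mathcal F_+\cup\mathcal F_-$, where
\[
 \mathcal F_+=\{y\in\mathcal F:K(y)>n/4\},\qquad
 \mathcal F_-=\{y\in\mathcal F:K(y)<n/4\}.
\]

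The first step is to show that no transition of \eqref{eq:swap-kernel} crosses between the two pieces. A success/failure exchange changes $K$ by $-1$, $0$ or $+1$. Moving from $\mathcal F_+$ to $\mathcal F_-$ would require $K$ to go from at least $n/4+1$ to at most $n/4-1$, a change of at least two. Any path would instead have to visit the excluded value $K=n/4$, and such a state is not in $\mathcal F$. Any such proposal is rejected, and the chain stays put. Hence $Q(y,z)=0$ whenever $y$ and $z$ lie in different pieces, and each piece is closed under $Q$.

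The second step, which I expect to be the only real point, is to show that both pieces are nonempty and in fact equally large. I would use the complement map $\sigma(y)=1-y$. This map sends $M$ to $n-M=n/2$, so it preserves the fixed total. For every root candidate $c$ with left size $a_c$, it sends $K_c$ to $a_c-K_c$, and hence
\[
 nK_c-a_cM\;\mapsto\; n(a_c-K_c)-a_c\,\tfrac n2=-(nK_c-a_cM).
\]
So every $T_{\varnothing,c}^2$ is unchanged. Consequently all the strict and weak comparisons in \eqref{eq:local-selection-event}, including the tie-priority ones, are preserved, and so is positivity of the winning gain. There are no outside labels at the root, so $\sigma$ maps $E$ to itself. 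This is exactly where it matters that $E$ does not record the sign of the winning score. On the selected child, $\sigma$ sends $K-n/4$ to $n/2-K-n/4=-(K-n/4)$. Thus $\sigma$ is a bijection from $\mathcal F_+$ onto $\mathcal F_-$. Since $\mathcal F\neq\varnothing$, both pieces are nonempty and $\pi_0(\mathcal F_\pm)=1/2$. Two nonempty closed classes give reducibility.

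Finally, for $y\in\mathcal F_+$ (the case $y\in\mathcal F_-$ is symmetric) and any $\ell\ge0$, the measure $Q^\ell(y,\cdot)$ is supported on $\mathcal F_+$. Taking $S=\mathcal F_-$ in the definition of total variation gives
\[
 \|Q^\ell(y,\cdot)-\pi_0\|_{\mathrm{TV}}\ge|0-\pi_0(\mathcal F_-)|=\tfrac12,
\]
which is \eqref{eq:disconnected-tv}. The case $\ell=0$ is included, because $Q^0(y,\cdot)$ is the point mass at $y\in\mathcal F_+$.
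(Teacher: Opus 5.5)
Your proof is correct and follows essentially the same route as the paper. The score $T_{c_*}=2(K-n/4)/\sqrt n$ excludes the integer value $K=n/4$, single swaps change $K$ by at most one, and the complement map preserves every $T_c^2$ while exchanging $\mathcal F_+$ and $\mathcal F_-$; together these give closed halves of mass $1/2$ and hence the total-variation bound, and presenting closedness before the symmetry argument is only a change of order.
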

\begin{proof}
Write $r_*=n/4\in\mathbb Z$. For the prescribed split,
\[
 T_{c_*}(y)=\frac{nK(y)-(n/2)(n/2)}{\sqrt{n(n/2)(n/2)}}
           =\frac{2}{\sqrt n}\{K(y)-r_*\}.
\]
Positive selected gain excludes $K(y)=r_*$. Thus $\mathcal F$ is
the disjoint union of
\[
 \mathcal F_- =\{y\in\mathcal F:K(y)<r_*\},\qquad
 \mathcal F_+ =\{y\in\mathcal F:K(y)>r_*\}.
\]
For the complement map $J(y)=\mathbf1-y$, the total remains $n/2$.
For every root candidate $c$, both child means are complemented, so
\[
 T_c(J(y))=-T_c(y),\qquad
 \operatorname{Gain}(c;J(y))=\operatorname{Gain}(c;y).
\]
The eligible candidates depend only on $X$, and every gain comparison,
positivity decision, and deterministic tie is therefore unchanged.
Consequently $J$ maps $\mathcal F$ bijectively onto itself. Since
\[
 K(J(y))=a-K(y)=2r_*-K(y),
\]
it also maps $\mathcal F_-$ bijectively onto $\mathcal F_+$. Nonemptiness
of $\mathcal F$ implies that both parts are nonempty and
\[
 |\mathcal F_-|=|\mathcal F_+|,\qquad
 \pi_0(\mathcal F_-)=\pi_0(\mathcal F_+)=\tfrac12.
\]
A proposed exchange between a success at index $i$ and a failure at $j$
changes the count by
\[
 K(z)-K(y)=-\1\{i\in A\}+\1\{j\in A\}\in\{-1,0,1\}.
\]
Counts in the two parts differ by at least two, so
$Q(y,z)=0$ for $y\in\mathcal F_-$, $z\in\mathcal F_+$, and also
in the reverse direction. The two parts are closed under $Q$.
By induction, for every $\ell\ge0$,
\[
 Q^\ell(y,\mathcal F_+)=0\quad(y\in\mathcal F_-),\qquad
 Q^\ell(y,\mathcal F_-)=0\quad(y\in\mathcal F_+).
\]
This proves reducibility. Taking the opposite part as $S$ in the
supremum defining total variation proves \eqref{eq:disconnected-tv}.
\end{proof}

Proposition~\ref{prop:finite-mc} remains applicable in this example,
whereas \eqref{eq:disconnected-tv} shows that increasing $L$ cannot
make a chain started at one configuration explore the full conditional
law. Longer runs may improve exploration within a closed part.
Computational effort, count variation, the all-tied fraction, and power
therefore describe different aspects of the calculation. Alternative
reversible proposals retain the same validity proof, while their effects
on power require separate evaluation.

The completed analysis uses equal-tailed ranks of $K$. The exchangeability
proof also permits an upper-tail test of the selected squared score
$(N_RK-aM)^2/(N_Rab)$, or another statistic fixed conditional on
$\mathcal D$ and applied identically to every endpoint. Such orderings
can differ in power and require a prespecified comparison; choosing the
smaller p-value after computing both is not covered by the guarantee.
\section{Simulation study}\label{sec:simulations}
We evaluate three aspects of the procedures developed in
Sections~\ref{sec:homogeneity}--\ref{sec:computation}: rejection under
parent homogeneity, power at a recursively selected third-level target,
and the effect of allocating more computation to chain length or to
the number of reference endpoints. The experiments retain all-cutpoint
Gini selection and compare independent with correlated predictors.

\subsection{Design and evaluation}
Each dataset contains $n\in\{200,400\}$ independent observations with
ten predictors generated as
\begin{equation}
 X_i\overset{\mathrm{iid}}\sim N_{10}(0,\Sigma_\rho),\qquad
 (\Sigma_\rho)_{jk}=\rho^{|j-k|},\qquad \rho\in\{0,0.5\}.
 \label{eq:simulation-design}
\end{equation}
The design is redrawn for every dataset. Conditional on $X$, the outcomes
are independent with $Y_i\sim\operatorname{Bernoulli}\{m(X_i)\}$.
We use the growing rule in Section~\ref{sec:cart-algorithm}, with
maximum depth $D=3$ and minimum child size $n_{\min}=20$. The root
split is level one. The primary test uses $B=199$ reference endpoints,
$L=100$ proposals per chain, and nominal level $\alpha=0.05$,
requiring $(B+1)L=20,000$ proposals per target.

Global-null controls set $m(x)=q$. For independent predictors we use
$q\in\{0.1,0.5\}$; for correlated predictors we use $q=0.3$.
Every existing internal node is tested, with results summarized
separately by level. These controls cover different baseline risks;
the comparison of predictor correlation at a common risk model is
made in the signal experiment below.

To place a target at level three, define generating regions
\begin{align}
 \mathcal R_1&=\{x:x_1\le0\},&
 \mathcal R_2&=\mathcal R_1\cap\{x:x_2\le c_2\},\nonumber\\
 \mathcal A_\star&=\mathcal R_2\cap\{x:x_3\le c_3\},&
 \mathcal B_\star&=\mathcal R_2\setminus\mathcal A_\star .
 \label{eq:simulation-regions}
\end{align}
The thresholds are conditional population medians, chosen so that
$\PP(X\in\mathcal R_2)=1/4$ and
$\PP(X\in\mathcal A_\star)=\PP(X\in\mathcal B_\star)=1/8$.
They are $c_2=c_3=0$ when $\rho=0$ and, to ten decimal places,
$c_2=-0.3935190102$, $c_3=-0.5606328477$ when $\rho=0.5$.
Generate outcomes using
\begin{equation}
 m_\Delta(x)=
 \begin{cases}
  0.3-\Delta/2,&x\in\mathcal A_\star,\\
  0.3+\Delta/2,&x\in\mathcal B_\star,\\
  0.6,&x\in\mathcal R_1\setminus\mathcal R_2,\\
  0.75,&x\notin\mathcal R_1,
 \end{cases}
 \qquad \Delta\in\{0,0.1,0.2,0.3,0.4\}.
 \label{eq:simulation-risk}
\end{equation}
The mean risk in $\mathcal R_2$ remains 0.3. The population mean
risk differences at both generating ancestor splits also remain 0.3,
since the mean in $\mathcal R_1$ is 0.45. Thus varying $\Delta$
changes the target signal while preserving these ancestor contrasts.
The quantiles define the generating regions; CART still searches all
eligible cutpoints. Expected generating-parent sizes are 50 and 100
at $n=200$ and 400, respectively, with expected child sizes 25 and 50.
Realized counts can fall below the minimum-child-size requirement.

For the signal experiment, prespecify the fitted address $v=LL$.
Write $I(C)=\{i:X_i\in C\}$ for a generating region $C$ and define
the target-opportunity event
\begin{equation}
 S=\left\{
 \begin{array}{l}
 H_{LL}(X,Y)\ne\dagger,\quad I_{LL}\subseteq I(\mathcal R_2),\\
 I_{LL}\cap I(\mathcal A_\star)\ne\varnothing,\quad
 I_{LL}\cap I(\mathcal B_\star)\ne\varnothing
 \end{array}\right\}.
 \label{eq:simulation-opportunity}
\end{equation}
This evaluation rule excludes parents containing the fixed outside
signals and requires representation of both generating target children.
It does not require exact recovery of the generating cuts.
On $S$, $\Delta=0$ gives a homogeneous selected parent, whereas
$\Delta>0$ gives a nonhomogeneous parent. The actually fitted children
can themselves have heterogeneous risks, so alternative rejection
measures detection of parent nonhomogeneity.

The known generating regions are used only for evaluating $S$, not
for fitting CART or computing a test. Once $X$ and the selected path
are fixed, $S$ is determined; it adds no restriction to the reference
sampler beyond that conditioning. For each signal setting we report
\begin{equation}
 \PP(S),\qquad \PP(\text{reject}\mid S),\qquad
 \PP(S\cap\{\text{reject}\}).
 \label{eq:simulation-estimands}
\end{equation}
The first measures target availability, the second conditional
rejection, and the third selection followed by rejection among all
generated datasets. Missing or terminal targets contribute zero to
the third quantity and are not counted as performed tests.

Each of the 20 signal settings and six global-null controls has
2,000 dataset replicates. Four independent-feature control settings
are shared with Supplement~\ref{sec:strong-under-weak}; the remaining
22 settings comprise 44,000 new dataset realizations. Replicates are
independent within a setting. In the new study, settings with the same
sample size and replicate index share underlying normal draws and outcome
uniforms. The number of replicates is fixed, rather than the number
satisfying $S$.

We compare the inclusive and randomized procedures in
Section~\ref{sec:mc-procedures}, using the same endpoints.
Randomized rejection is summarized by
\eqref{eq:integrated-randomized-rejection}, integrating only the final
tie uniform. An unadjusted equal-tailed hypergeometric test at the
same fitted split provides a benchmark for the effect of ignoring
selection. For a displayed row, let $W_r$ sum rejection contributions
and $D_r$ count tested targets in dataset $r$. With
$R_{\mathrm{MC}}=2,000$, we estimate
\begin{equation}
 \widehat\pi=\frac{\sum_r W_r}{\sum_r D_r},\qquad
 \widehat{\operatorname{SE}}(\widehat\pi)
 =\frac{\operatorname{sd}(W_r-\widehat\pi D_r)}
        {\sqrt{R_{\mathrm{MC}}}\,\overline D}.
 \label{eq:simulation-mcse}
\end{equation}
This treats a dataset as the independent unit when several nodes
contribute. For conditional signal curves, $D_r=\1\{S_r\}$.
Availability and joint detection use all 2,000 datasets.

\subsection{Rejection under parent homogeneity}
Table~\ref{tab:calibration} shows substantial conservatism for the
inclusive procedure: global-null rejection ranges from 0.26\% to
2.75\%. Randomized rejection ranges from 4.16\% to 5.78\%, compared
with 35.37\% to 99.40\% for the unadjusted test. The latter reflects
the search for a large observed difference, and its alternative
rejection is therefore not used as a power advantage over the
selective tests.

Local-null randomized estimates range from 3.57\% to 5.64\%.
The $n=200$, $\rho=0.5$ local estimate is 3.57\%
(Monte Carlo standard error 0.54 percentage points).
In the correlated global controls, the root estimate at $n=200$ is
4.16\% (0.33 points), and the level-three estimate at $n=400$ is
5.50\% (0.21 points). These three estimates differ from 5\% by more
than 1.96 Monte Carlo standard errors. The table reports finite-run
empirical averages over designs and selected histories, rather than
the exact rejection probabilities for each conditioning event.

The extent of tied ranks depends on the setting. Among independent
global controls with $q=0.1$, all endpoint counts are equal in
38.09\% of level-three tests at $n=200$ and 16.52\% at $n=400$.
For the local-null target, the corresponding percentages across the
four settings are 0--0.31\%. As shown in
Section~\ref{sec:mc-procedures}, an all-tied sample contributes exactly
0.05 to randomized rejection. Calibration and discrimination are
therefore evaluated separately.

\begin{table}[!htbp]
\centering\small
\setlength{\tabcolsep}{3.1pt}
\caption{Rejection under parent homogeneity at nominal 5\%.
Each setting has 2,000 datasets. Entries are rejection percentages
(Monte Carlo standard errors in percentage points).
The tests column counts selected targets; the all-tied column gives
the percentage with identical observed and reference success counts.
Global controls test every existing internal node at the indicated level;
local rows use the event $S$ in \eqref{eq:simulation-opportunity}.}
\begin{tabular}{rrrrrrrrr}
\toprule
$n$ & $\rho$ & $q$ & Level & Tests & Inclusive & Randomized & Unadjusted & All-tied\\
\midrule
\multicolumn{9}{l}{Global homogeneous controls}\\
200 & 0 & 0.1 & 1 & 2,000 & 1.30 (0.25) & 4.95 (0.32) & 91.00 (0.64) & 1.20\\
200 & 0 & 0.1 & 2 & 3,030 & 0.53 (0.13) & 5.00 (0.18) & 61.42 (0.87) & 16.07\\
200 & 0 & 0.1 & 3 & 3,116 & 0.26 (0.09) & 4.91 (0.12) & 35.37 (0.82) & 38.09\\
200 & 0 & 0.5 & 1 & 2,000 & 1.90 (0.31) & 4.72 (0.37) & 98.25 (0.29) & 0.00\\
200 & 0 & 0.5 & 2 & 3,171 & 1.51 (0.22) & 5.42 (0.27) & 82.75 (0.63) & 0.13\\
200 & 0 & 0.5 & 3 & 3,585 & 1.09 (0.17) & 5.14 (0.22) & 66.22 (0.78) & 0.70\\
200 & 0.5 & 0.3 & 1 & 2,000 & 1.30 (0.25) & 4.16 (0.33) & 97.85 (0.32) & 0.00\\
200 & 0.5 & 0.3 & 2 & 3,176 & 0.79 (0.16) & 5.12 (0.24) & 79.38 (0.70) & 0.69\\
200 & 0.5 & 0.3 & 3 & 3,670 & 0.68 (0.14) & 4.98 (0.19) & 62.04 (0.80) & 3.95\\
400 & 0 & 0.1 & 1 & 2,000 & 1.85 (0.30) & 4.90 (0.36) & 96.00 (0.44) & 0.60\\
400 & 0 & 0.1 & 2 & 3,295 & 0.94 (0.17) & 5.35 (0.22) & 81.06 (0.67) & 6.10\\
400 & 0 & 0.1 & 3 & 4,396 & 0.30 (0.08) & 4.79 (0.13) & 62.56 (0.71) & 16.52\\
400 & 0 & 0.5 & 1 & 2,000 & 2.75 (0.37) & 5.78 (0.43) & 99.40 (0.17) & 0.00\\
400 & 0 & 0.5 & 2 & 3,468 & 1.38 (0.20) & 5.11 (0.26) & 90.60 (0.48) & 0.17\\
400 & 0 & 0.5 & 3 & 4,840 & 1.10 (0.15) & 5.08 (0.21) & 79.73 (0.56) & 0.52\\
400 & 0.5 & 0.3 & 1 & 2,000 & 2.75 (0.37) & 5.63 (0.43) & 99.10 (0.21) & 0.00\\
400 & 0.5 & 0.3 & 2 & 3,451 & 1.22 (0.19) & 5.17 (0.24) & 88.35 (0.52) & 0.52\\
400 & 0.5 & 0.3 & 3 & 4,772 & 1.13 (0.15) & 5.50 (0.21) & 77.43 (0.58) & 2.43\\
\midrule\multicolumn{9}{l}{Local homogeneous target ($\Delta=0$, event $S$)}\\
200 & 0 & 0.3 & 3 & 373 & 1.61 (0.65) & 5.64 (0.79) & 46.11 (2.58) & 0.27\\
200 & 0.5 & 0.3 & 3 & 320 & 0.31 (0.31) & 3.57 (0.54) & 38.12 (2.72) & 0.31\\
400 & 0 & 0.3 & 3 & 696 & 1.58 (0.47) & 5.16 (0.60) & 83.62 (1.40) & 0.00\\
400 & 0.5 & 0.3 & 3 & 737 & 1.36 (0.43) & 4.82 (0.56) & 84.67 (1.33) & 0.14\\
\bottomrule
\end{tabular}
\label{tab:calibration}
\end{table}

\FloatBarrier
\subsection{Signal strength and third-level power}
Across the 20 signal settings, 9,552 of 40,000 datasets satisfy $S$;
individual setting counts range from 154 to 745.
Figure~\ref{fig:depth3-power} shows increasing conditional power
as the generating signal becomes large. At $\Delta=0.4$, where the
generating child risks are 0.1 and 0.5, randomized power at $n=200$
is 28.27\% for $\rho=0$ and 15.70\% for $\rho=0.5$.
At $n=400$, these values increase to 70.77\% and 50.97\%.
Inclusive power at $n=400$ is 62.86\% and 39.29\%, respectively.
Thus nontrivial power is available at level three, although it depends
strongly on sample size and signal magnitude.

For smaller differences, sensitivity remains limited. At $n=400$
and $\Delta=0.2$, randomized power is 10.23\% with independent
predictors and 8.25\% with correlated predictors. At $\Delta=0.3$,
these rates are 32.21\% and 24.13\%. In these evaluated settings,
correlation reduces power for the larger signals. The curves average
over changing selected paths and target samples; the one-sided
monotonicity result in Section~\ref{sec:information} concerns a fixed
conditional experiment and does not imply their shape.

Target availability provides a second constraint. At $\Delta=0.4$
and $n=400$, $S$ occurs in 29.35\% of datasets when $\rho=0$ and
21.00\% when $\rho=0.5$. The corresponding probabilities of selection
followed by randomized rejection are 20.77\% and 10.70\%.
At $n=200$, these joint probabilities are 4.23\% and 1.21\%.
A high conditional rejection rate therefore need not translate into
a high chance of detecting the prespecified target from a new dataset.
In this experiment, availability also changes with $\Delta$ despite
the fixed population ancestor contrasts.

\subsection{Effect of the computational budget}
We reuse the $n=400$, $\rho=0.5$ settings with $\Delta=0$ and
$\Delta=0.3$, comparing the primary budget with $B=199,L=500$ and
$B=999,L=100$. Each larger budget uses 100,000 proposals per target,
five times the primary budget. The observed datasets and target
opportunities are identical across budgets; Monte Carlo streams are
independent across budgets. The comparison adds 2,634 finite-run
tests without generating additional datasets.

Figure~\ref{fig:mc-budget} shows little power gain from either
increase. Under $\Delta=0.3$, randomized rejection is 24.13\%,
24.40\%, and 24.54\% for the primary, longer-chain, and more-endpoint
budgets. The paired increases over the primary budget are
0.27 percentage points (Monte Carlo standard error 0.46;
95\% interval $[-0.63,1.17]$) and 0.41 points
(0.34; $[-0.27,1.08]$), respectively.
Inclusive rejection is 16.21\%, 16.03\%, and 15.69\%.
Median native inference times increase from 42.0 milliseconds to
209.8 and 210.2 milliseconds. Under the local null, randomized
rejection is 4.82\%, 5.05\%, and 4.99\%.

All-tied counts occur in 0.14\% of these null targets and in none
of these alternative targets at each budget. Under the alternative,
the mean numbers of distinct counts are 5.08, 5.07, and 5.59.
Thus complete rank ties do not explain the limited power of this
representative calculation. The comparison shows that these two
fivefold budget increases offer little improvement in the evaluated
setting; it does not establish exploration of the full selected fiber
or quantify separately the contributions of conditioning and
computational restrictions.

\begin{figure}[p]
\centering
\includegraphics[width=\textwidth]{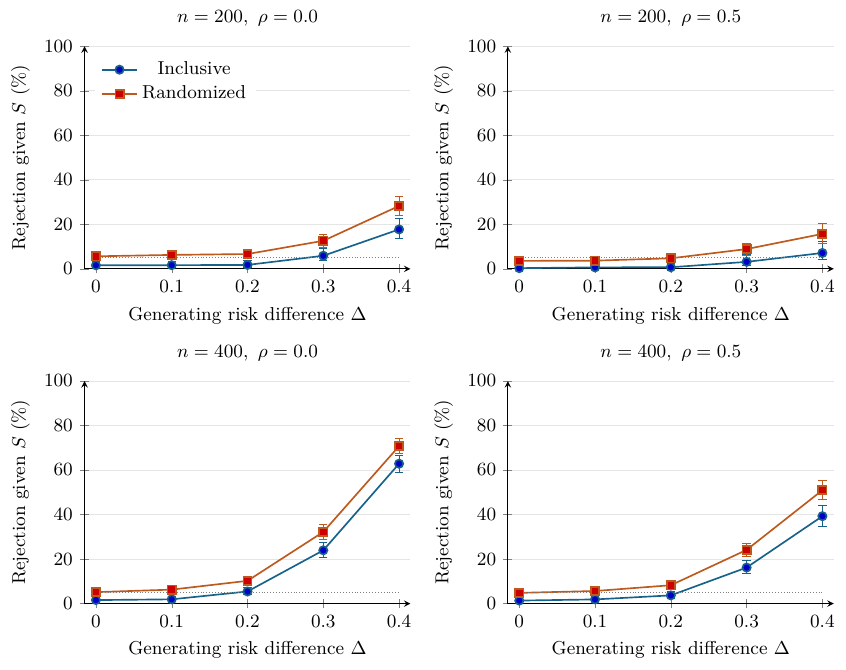}
\par\medskip
{\scriptsize\setlength{\tabcolsep}{2.2pt}
\begin{tabular}{lrrrrr}
\toprule
$n,\rho$ & $\Delta=0$ & $0.1$ & $0.2$ & $0.3$ & $0.4$\\
\midrule
200, 0 & 18.65 / 0.30 / 1.05 & 18.20 / 0.30 / 1.14 & 16.95 / 0.30 / 1.13 & 16.15 / 0.95 / 2.04 & 14.95 / 2.65 / 4.23\\
200, 0.5 & 16.00 / 0.05 / 0.57 & 15.05 / 0.10 / 0.55 & 13.90 / 0.10 / 0.66 & 11.15 / 0.35 / 0.99 & 7.70 / 0.55 / 1.21\\
400, 0 & 34.80 / 0.55 / 1.79 & 36.45 / 0.70 / 2.29 & 35.80 / 1.95 / 3.66 & 33.20 / 7.95 / 10.70 & 29.35 / 18.45 / 20.77\\
400, 0.5 & 36.85 / 0.50 / 1.78 & 37.25 / 0.70 / 2.11 & 35.20 / 1.30 / 2.90 & 29.00 / 4.70 / 7.00 & 21.00 / 8.25 / 10.70\\
\bottomrule
\end{tabular}}
\caption{Power at the third-level target under \eqref{eq:simulation-risk}.
Curves give rejection conditional on $S$. Each setting has 2,000
datasets, with 154--745 target opportunities.
Inclusive error bars are Wilson 95\% intervals; randomized bars are
$\pm1.96$ Monte Carlo standard errors. The dotted line is 5\%.
Each cell below the plots gives availability / joint inclusive
detection / joint randomized detection, all in percent of 2,000.
The curves describe the prespecified target on $S$, not every
selected third-level node.}
\label{fig:depth3-power}
\end{figure}

\begin{figure}[p]
\centering
\includegraphics[width=\textwidth]{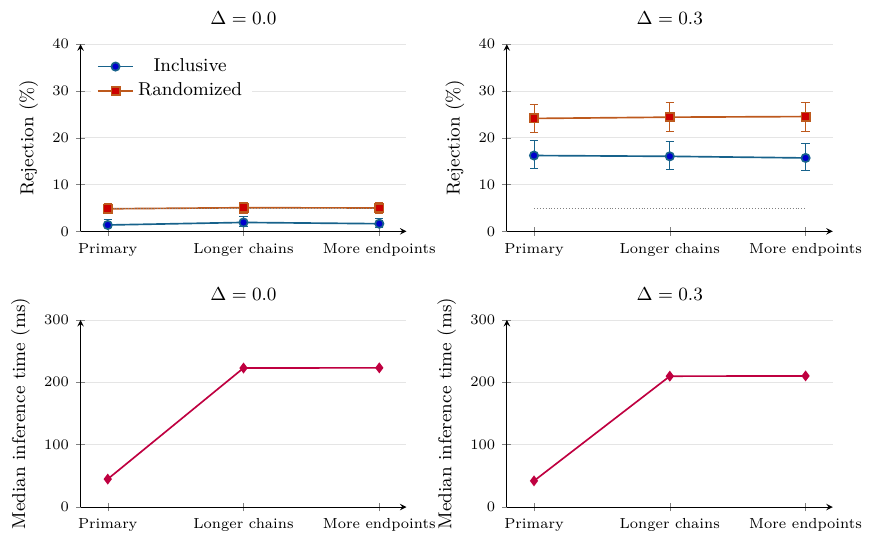}
\par\medskip
{\small\setlength{\tabcolsep}{5pt}
\begin{tabular}{rrrrrrr}
\toprule
$\Delta$ & $B$ & $L$ & Targets & Time (ms) & Distinct $K$ & All-tied (\%)\\
\midrule
0 & 199 & 100 & 737 & 44.8 & 5.07 & 0.14\\
0 & 199 & 500 & 737 & 222.9 & 5.08 & 0.14\\
0 & 999 & 100 & 737 & 223.2 & 5.74 & 0.14\\
0.3 & 199 & 100 & 580 & 42.0 & 5.08 & 0.00\\
0.3 & 199 & 500 & 580 & 209.8 & 5.07 & 0.00\\
0.3 & 999 & 100 & 580 & 210.2 & 5.59 & 0.00\\
\bottomrule
\end{tabular}}
\caption{Computational-budget comparison at $n=400$, $\rho=0.5$.
Primary: $(B,L)=(199,100)$; longer chains: $(199,500)$;
more endpoints: $(999,100)$. The same 737 null and 580 alternative
target opportunities are used at every budget. Top panels show
inclusive and randomized rejection, with intervals as in
Figure~\ref{fig:depth3-power}; bottom panels show median native
inference time. The table also gives the mean number of distinct
counts and the percentage of all-tied samples. Timing excludes tree
fitting and data generation and is measured during four-worker execution.}
\label{fig:mc-budget}
\end{figure}
\FloatBarrier
\section{Discussion}\label{sec:discussion}
Applying established conditional-inference and Monte Carlo principles,
we obtain a finite-sample test for binary Gini CART under parent
homogeneity. The contribution is an explicit conditional count law and
computational procedure accounting for all eligible cutpoints,
deterministic ties, and ancestor selection. The connectivity analysis
and recursive-tree simulations characterize the power and computational
limitations of this CART implementation.

The simulations demonstrate nontrivial power at a third-level target
when the signal is large. With 400 observations and a generating risk
difference of 0.4, conditional randomized power is about 71\% for
independent predictors and 51\% for correlated predictors.
Differences of 0.1--0.2 remain difficult to detect. Moreover, the
probability of reaching the target and rejecting is appreciably lower
than conditional power alone. Even at $n=400$, the generating
third-level parent contains only 100 observations on average.
The full sample size therefore overstates the observations available
for a local comparison.

Including an entire tied block preserves validity but makes the
inclusive procedure conservative. Tie randomization removes this
source of conservatism without adding information to the selected
data. Frequent all-tied calculations in some low-risk controls
illustrate the distinction. In the representative power experiment,
however, all-tied counts were absent, and fivefold budget increases
changed randomized power by less than half a percentage point.
Neither increase offered a substantial gain in that setting.
This does not establish exploration of the full conditional law:
conditional information and connectivity of the swap kernel describe
different restrictions on power.

The scientific interpretation depends on the null. Homogeneity is
a testing target or modeling assumption, not a consequence of fitting
a tree. Under the two-child model, rejection supports unequal
constant child risks. With heterogeneous risks, it need not establish
unequal child averages, and failure to reject does not verify
homogeneity. Such averages can concern the sampled predictor values
or the population within the selected regions.
Supplement~\ref{sec:regional-means} defines these two contrasts and
examines the consequences of interpreting the homogeneity test as
a test of mean equality. When individual risks vary, conditioning on
the parent total leaves weights proportional to
$\exp\{\sum_{i\in R}y_i\operatorname{logit}(p_i)\}$.
The common-factor cancellation then fails, making selective
inference for heterogeneous mean contrasts a distinct problem.

The numerical evidence covers two sample sizes, Gaussian predictor
designs, and the specified signal structure and budgets; it does not
yield a universal minimum node size or establish routine clinical
utility. Pruning, missing predictors, outcome-driven tuning, and an
additional rule for choosing which node to report would require
revisiting the selection event. The guarantee is nodewise rather
than simultaneous over the tree. Applications should state the
scientific role of parent homogeneity or constant child risks and
assess target discovery alongside conditional rejection.

\section*{Acknowledgment}
The authors used GPT-6 Astra by Open AI for English translation of the manuscript. All translated text was checked and revised by the authors, who take full responsibility for the final content.

\clearpage
\appendix
\section{Exploratory assumption-violation experiments}
\label{sec:strong-under-weak}
\subsection{Fixed-design and population mean contrasts}
\label{sec:regional-means}
Parent homogeneity differs from equality of average risks when
individual success probabilities vary within the children. Two average
risk contrasts are relevant. Let $A,B\subseteq\mathbb R^p$ denote
the predictor regions of a specified sibling pair, and write
$I_A=\{i:X_i\in A\}$ and $I_B=\{i:X_i\in B\}$ for their nonempty
sample index sets. Conditional on the observed design, define
\begin{equation}
 \Delta_X(A,B)=\frac{1}{|I_A|}\sum_{i\in I_A}p_i
                  -\frac{1}{|I_B|}\sum_{i\in I_B}p_i.
 \label{eq:design-contrast}
\end{equation}
The fixed-design mean null $\Delta_X(A,B)=0$ permits heterogeneous
probabilities within either child. It is implied by
$H_{0,R}^{\mathrm{hom}}$ with $R=I_A\cup I_B$, but does not imply
that homogeneity condition.

For a population interpretation, suppose observations are iid with
risk function $m(x)=\PP(Y=1\mid X=x)$, so $p_i=m(X_i)$. For regions
of positive probability, the population contrast is
\begin{equation}
 \Delta_P(A,B)=\EE[Y\mid X\in A]-\EE[Y\mid X\in B]
             =\EE[m(X)\mid X\in A]-\EE[m(X)\mid X\in B].
 \label{eq:population-contrast}
\end{equation}
Here the expectations concern a new observation with the fitted
numerical boundaries held fixed. The population mean null
$\Delta_P(A,B)=0$ allows heterogeneity throughout both regions.
It need not give $\Delta_X(A,B)=0$ for the realized design, and
equality of the fixed-design contrast does not establish equality of
the population contrast. Sample-point homogeneity alone also does not
establish population homogeneity. Across training samples, the selected
regions and their contrasts are random.

Under a constant-risk two-child model, the fixed-design contrast is
$p_A-p_B$; the population contrast has the same value when the two
risks are constant throughout the respective predictor regions.
Outside those models, a rejection of parent homogeneity need not imply
a nonzero average-risk contrast. The experiments below examine the
population mean null and record the fixed-design contrast separately.
They assess the consequences of interpreting the homogeneity test as
a test of regional mean equality.

\subsection{Purpose and design}
We next evaluate the homogeneity test when its output is used to infer a
difference in population mean risks. In the heterogeneous cases below the
strong null is false. Rejection is therefore not a type-I error for the
homogeneity hypothesis itself; it is an erroneous mean-difference conclusion
only if that test is interpreted as testing the weak population null.
This distinction is the purpose of the experiment.

We generated $n=200$ or $400$ observations with ten independent standard
normal predictors and grew unpruned Gini trees with all eligible cutpoints,
minimum child size 20, and maximum depth three. The root is level one.
Every selected internal split was tested, conditioning on its full ancestor
path, the observed design, its parent success total, and every outside label.
Numerical thresholds and the fixed tie priority were retained. We extended
the exact integer implementation to arbitrary finite paths, retaining the
same $B=199$, $L=100$ swap-based test as in Section~\ref{sec:computation}.

The primary comparison uses homogeneous risk $m(x)=q$ and a four-way
interaction
\begin{equation}
 m(x)=q+a\prod_{j=1}^{4}\operatorname{sign}(x_j),
 \qquad (q,a)\in\{(0.1,0.08),(0.5,0.4)\}.
 \label{eq:weak-stress-model}
\end{equation}
Individual risks are respectively $0.02/0.18$ or $0.1/0.9$ in the
heterogeneous settings. Every rectangle formed by at most three path cuts
leaves at least one of the four interaction coordinates unconstrained.
Independence and symmetry give mean risk $q$ in every such rectangle, at
every realized numerical boundary. Thus every tested population contrast
through level three is exactly zero, despite within-parent risk variance
$a^2$. This is a deliberately constructed population weak-null example,
not a general model of clinical heterogeneity. The interaction produces no
population mean contrast accessible along these short paths. Its behavior
therefore need not represent more detectable forms of within-parent variation.

Each of the eight primary scenarios uses 2,000 independent training
replicates. A prespecified sensitivity replaces the four-way interaction
by $m(x)=q+a\operatorname{sign}(x_1)\operatorname{sign}(x_2)$, with 1,000
replicates in each of four scenarios. Deeper contrasts can then be nonzero.
For independent normal predictors, integration over a rectangle factorizes:
its risk is $q+a\prod_j\EE[\operatorname{sign}(X_j)\mid l_j<X_j\le u_j]$,
over the interaction coordinates. We use these analytical integrals to
identify truly null selected contrasts and heterogeneous parents. Nonzero
contrasts are retained in the archive and excluded from false-mean-difference
summaries. Known model risks are used only for this evaluation, never for
tree fitting or testing. These scenarios and budgets were fixed before inspecting the
main rejection rates; total training datasets number 20,000.

The weak null here is the population statement
$\Delta_P(\widehat A,\widehat B)=0$. The finite-design contrast
$\Delta_X(\widehat A,\widehat B)$ need not be zero; we record it separately.
Accordingly this is not a demonstration of validity or invalidity under an
exact fixed-design weak null. Homogeneous controls remain valid benchmarks
because their strong-null guarantee holds for every fixed design and hence
after averaging over random designs.

We report inclusive and randomized rejection as defined in
Section~\ref{sec:mc-procedures}. The randomized columns average the
conditional rejection probabilities in
\eqref{eq:integrated-randomized-rejection}, analytically integrating only
the last tie uniform. Rates pool eligible selected nodes within a level;
Monte Carlo standard errors cluster at the independent training-dataset
level. Different scenarios share predictor draws, and paired risk settings
share outcome uniforms; their combined count is not a count of independent
cross-scenario observations. Node counts and the fraction with no variation
in the $B+1$ endpoint success counts accompany the rates. This diagnostic
matters because an all-tied rank produces randomized rejection probability
exactly $0.05$ even when the calculation has no discriminatory information.
The maximum depth does not force every node to split: missing or zero-gain
splits are not silently counted as tests. These are nodewise summaries,
not probabilities of at least one rejection in the whole tree.

\subsection{Results and interpretation}
\begin{table}[tb]
\centering\small
\caption{Strong-null test applied to population weak nulls, maximum depth three.
All primary selected contrasts are exactly zero. Entries are percent rejection
(Monte Carlo standard error in percentage points), from 2,000 training
replicates per setting. Standard errors cluster by training dataset;
randomized entries integrate the final tie uniform.}
\begin{tabular}{rrllrrr}
\toprule
$n$ & $q$ & Risk model & Test & Level 1 & Level 2 & Level 3\\
\midrule
200 & 0.1 & Homogeneous & Inclusive & 1.30 (0.25) & 0.53 (0.13) & 0.26 (0.09)\\
 & &  & Randomized & 4.95 (0.32) & 5.00 (0.18) & 4.91 (0.12)\\
200 & 0.1 & Four-way & Inclusive & 1.05 (0.23) & 0.69 (0.15) & 0.26 (0.09)\\
 & &  & Randomized & 4.70 (0.31) & 5.02 (0.19) & 5.31 (0.15)\\
200 & 0.5 & Homogeneous & Inclusive & 1.90 (0.31) & 1.51 (0.22) & 1.09 (0.17)\\
 & &  & Randomized & 4.72 (0.37) & 5.42 (0.27) & 5.14 (0.22)\\
200 & 0.5 & Four-way & Inclusive & 2.30 (0.34) & 1.22 (0.19) & 0.89 (0.15)\\
 & &  & Randomized & 5.36 (0.41) & 5.17 (0.26) & 4.88 (0.21)\\
400 & 0.1 & Homogeneous & Inclusive & 1.85 (0.30) & 0.94 (0.17) & 0.30 (0.08)\\
 & &  & Randomized & 4.90 (0.36) & 5.35 (0.22) & 4.79 (0.13)\\
400 & 0.1 & Four-way & Inclusive & 1.50 (0.27) & 0.72 (0.15) & 0.45 (0.10)\\
 & &  & Randomized & 4.40 (0.33) & 4.92 (0.20) & 4.91 (0.14)\\
400 & 0.5 & Homogeneous & Inclusive & 2.75 (0.37) & 1.38 (0.20) & 1.10 (0.15)\\
 & &  & Randomized & 5.78 (0.43) & 5.11 (0.26) & 5.08 (0.21)\\
400 & 0.5 & Four-way & Inclusive & 2.70 (0.36) & 1.24 (0.19) & 1.06 (0.15)\\
 & &  & Randomized & 5.80 (0.44) & 4.84 (0.26) & 5.14 (0.20)\\
\bottomrule
\end{tabular}
\label{tab:weak-misuse-primary}
\end{table}
Table~\ref{tab:weak-misuse-primary} reports the completed primary comparison.
Across the four-way heterogeneous settings and all three levels, randomized
rejection ranges from 4.40\% to 5.80\%, and inclusive rejection from
0.26\% to 2.70\%. Matched homogeneous controls give randomized rates
from 4.72\% to 5.78\%. The paired heterogeneous-minus-control estimates
range from -0.50 to 0.65 percentage points. These results show no large
observed inflation across this particular stress-test grid; they do not
establish exact calibration at every setting or selection history.
At $n=200$, $q=0.1$, and level three, the paired increase is 0.40 points,
with an unadjusted 95\% Monte Carlo interval of [0.03,0.77] points.
This is one of several comparisons and is not multiplicity-adjusted;
small departures should not be concealed by describing all rates as nominal.

For the four-way interaction with $q=0.1$, level-three randomized rejection
is 5.31\% at $n=200$ and 4.91\% at $n=400$, compared with inclusive
rates 0.26\% and 0.45\%. There are 3,113 and 4,476 selected
third-level nodes, respectively. Their median parent/smaller-child counts
are 85/24 and 139/30. Endpoint success counts are all equal
in 37.36\% and 17.54\% of those tests. At $q=0.5$, the corresponding
all-equal proportions are 0.54\% and 0.33\%. Thus information
loss and the role of rank randomization differ considerably with prevalence.
The small inclusive rates and, particularly at low prevalence, frequent
all-tied ranks limit what can be inferred from a randomized rate near 5\%.
They are not evidence of useful power against weak-mean alternatives.

\begin{table}[tb]
\centering\small
\caption{Prespecified two-way interaction sensitivity, 1,000 training
replicates per setting. ``All'' counts selected splits; ``Null/heterogeneous''
counts those with zero population mean contrast and a heterogeneous parent.
Rejection percentages (dataset-clustered Monte Carlo standard errors) use
only that latter subset. Nonnull contrasts and homogeneous parents are not
mixed into these rates.}
\begin{tabular}{rrrrrrr}
\toprule
$n$ & $q$ & Level & All & Null/heterogeneous & Inclusive & Randomized\\
\midrule
200 & 0.1 & 1 & 1,000 & 1,000 & 0.90 (0.30) & 4.74 (0.43)\\
200 & 0.1 & 2 & 1,536 & 1,415 & 0.71 (0.22) & 5.40 (0.29)\\
200 & 0.1 & 3 & 1,555 & 1,406 & 0.07 (0.07) & 5.00 (0.17)\\
200 & 0.5 & 1 & 1,000 & 1,000 & 1.20 (0.34) & 3.83 (0.46)\\
200 & 0.5 & 2 & 1,579 & 1,349 & 0.96 (0.27) & 5.24 (0.37)\\
200 & 0.5 & 3 & 1,856 & 1,376 & 0.87 (0.25) & 5.13 (0.34)\\
400 & 0.1 & 1 & 1,000 & 1,000 & 1.40 (0.37) & 4.87 (0.48)\\
400 & 0.1 & 2 & 1,682 & 1,508 & 0.73 (0.22) & 5.04 (0.31)\\
400 & 0.1 & 3 & 2,236 & 1,942 & 0.31 (0.13) & 4.87 (0.20)\\
400 & 0.5 & 1 & 1,000 & 1,000 & 2.80 (0.52) & 5.34 (0.60)\\
400 & 0.5 & 2 & 1,722 & 1,448 & 0.97 (0.26) & 4.62 (0.36)\\
400 & 0.5 & 3 & 2,491 & 1,784 & 0.90 (0.22) & 5.05 (0.33)\\
\bottomrule
\end{tabular}
\label{tab:weak-misuse-sensitivity}
\end{table}
The two-way sensitivity separates genuine deeper mean differences from null
contrasts (Table~\ref{tab:weak-misuse-sensitivity}). For example, with
$q=0.5$, randomized root rejection is 3.83\% at $n=200$ and 5.34\%
at $n=400$. An apparent rejection at a deeper nonnull split is not counted
as a false mean-difference finding. Per-address and selected-root-feature
summaries, paired primary comparisons, and all target values are retained
in the accompanying results archive. These partial stratifications cannot
establish validity conditional on each complete feature/cut history.

Implementation checks independently reproduced 2,560 rational-arithmetic
fits and 313,344 path-membership decisions, including 167,424 at
level three. In 34 comparisons, the complete endpoint vectors matched
the earlier root/second-level implementation with the same random seed.
A separate exhaustive depth-three fiber gave randomized rejection
5.01\% (Monte Carlo standard error 0.03 points) in 6,000 validation
replicates. All 92,720 study tests were subsequently audited by recomputing
both rank-based rejection summaries directly from saved endpoint counts.

The experiment records no large observed rejection inflation in these
particular constructions; it does not provide convincing evidence of
robustness to heterogeneous risks. The high-order interaction and the
limited discrimination of some finite-run calculations weaken that inference.
Its exchangeability proof still requires the strong null. Neither averaging
over designs and selected nodes nor obtaining a rejection rate near 5\%
restores that missing guarantee. This experiment concerns independent
predictors and two interaction constructions; it does not establish validity
for arbitrary heterogeneity, correlated predictors, fixed-design weak nulls,
or useful weak-null power. The strong-null theory and its interpretation
therefore remain as stated in Section~\ref{sec:regional-targets}.

\end{document}